\renewcommand{\baselinestretch} {1.3}
\makeatletter \setcounter{page}{1}
\def\singlespace{\def\baselinestretch{1}\@normalsize}
\newtheorem{theorem}{Theorem}
\newtheorem{lemma}{Lemma}
\newtheorem{corollary}{Corollary}
\newcommand{\cM}{{\cal M}}
\newcommand{\cH}{\cal H}
\newcommand{\cG}{\cal G}
\newcommand{\Roracle}{R({\rm oracle})}
\newcommand{\te}{\theta}
\newcommand{\eps}{\varepsilon}
\def\*#1{\mathbf{#1}}
\newcommand{\hatS}{\widehat{S}}
\newcommand{\hatte}{\hat{\te}}
\newcommand{\hf}{\hat{f}}
\newcommand{\hte}{\hat{\te}}
\newcommand{\hatf}{\hat{f}}
\newcommand{\Tr}{\mbox{Tr}}
\newcommand{\Pen}{\mbox{Pen}}
\newcommand{\Var}{\mbox{Var}}
\newcommand{\card}{\mbox{card}}
\long\def\ignore#1{}
\newcommand{\be}{\begin{equation}}
\newcommand{\ee}{\end{equation}}
\newcommand{\beqn}{\begin{eqnarray}}
\newcommand{\eeqn}{\end{eqnarray}}
\newcommand{\bes}{\begin{equation*}}
\newcommand{\ees}{\end{equation*}}
\newcommand{\beqns}{\begin{eqnarray*}}
\newcommand{\eeqns}{\end{eqnarray*}}
\newcommand{\lkr}{\left(} 
\newcommand{\lkv}{\left[} 
\newcommand{\rkv}{\right]} 
\newcommand{\rkr}{\right)}  
\newcommand{\lfi}{\left\{}  
\newcommand{\rfi}{\right\}} 
\begin{document}

\title{\bf Solution of linear ill-posed problems by model selection and aggregation}

\author{{\bf Felix Abramovich}\\
Department of Statistics\\
 and Operations Research \\
Tel Aviv University \\
Israel \\
{\it felix@post.tau.ac.il}
\and
{\bf Daniela De Canditiis}\\	
Istituto per le Applicazioni  \\
del Calcolo "M. Picone", CNR  \\
Rome\\
 Italy \\
{\it d.decanditiis@iac.cnr.it}
\and {\bf Marianna Pensky} \\
Department of Mathematics\\
University of Central Florida \\
USA \\
{\it Marianna.Pensky@ucf.edu}
}

\date{}

\maketitle

\begin{abstract}
We consider  a general statistical linear inverse problem,
where the solution is represented via a known (possibly overcomplete) dictionary
that allows its sparse representation. We propose two different approaches.
A model selection estimator selects a single model by
minimizing the penalized empirical risk over all possible models.
By   contrast with direct problems, the penalty depends on the model 
itself rather than on its size only as for complexity penalties.
A Q-aggregate estimator averages over the entire collection of
estimators with properly chosen weights. Under mild
conditions on the dictionary, we establish oracle inequalities both 
with high probability and in expectation for the two estimators.
Moreover, for the latter estimator these inequalities are sharp.
The proposed procedures are implemented numerically and
their performance is assessed by a simulation study.
\end{abstract}

\noindent
{\em Keywords}:
Aggregation, ill-posed linear inverse problems, model selection, oracle inequality, overcomplete dictionary\\
\vspace{2mm}
{\em AMS (2000) Subject Classification}: {Primary: 62G05.  Secondary:  62C10  }

\bigskip


\section{Introduction} \label{sec:intr}

Linear inverse problems, where the data is available not on the object of primary
interest but only in the form of its   linear transform, appear in a variety of fields:
medical imaging (X-ray tomography, CT and MRI), astronomy (blured images), finance (model
calibration of volatility) to mention just a few. The main difficulty in solving
inverse problems is due to  the fact that most of practically interesting and
relevant cases fall into into the category of so-called ill-posed problems, where
the solution cannot be obtained numerically by simple invertion of the transform.
In statistical inverse problems the data is, in addition, corrupted by random noise that makes the solution even more challenging.

Statistical linear inverse problems have been intensively studied and there exists an enormous amount of literature devoted to
various theoretical and applied aspects of their solutions. We  refer a reader to
to Cavalier (2009) for review and references therein.

Let $\cG$ and $\cH$ be two separable Hilbert spaces and $A: \cG \rightarrow \cH$ be a bounded linear operator.
Consider a general statistical linear inverse problem
\be \label{eq:model}
y = Af + \sigma \varepsilon,
\ee
where $y$ is the observation, $f  \in G$ is the (unknown) object of interest,
$\varepsilon$ is a white noise and $\sigma$ is a (known) noise level. For ill-posed
problems $A^{-1}$ does not exist as a linear bounded operator.

Most of approaches for solving (\ref{eq:model}) essentially rely on reduction of the original
problem to a sequence model using the following general scheme:
\begin{enumerate}
\item Choose  some orthonormal
basis $\{\phi_j\}$ on $\cG$ and expand the unknown $f$ in (\ref{eq:model}) as
\be \label{eq:series}
f=\sum_j \langle  f,\phi_j\rangle_G~ \phi_j
\ee

\item Define $\psi_j$ as the solution of $A^*\psi_j=\phi_j$,
where $A^*$ is the adjoint operator, that is, $\psi_j=A(A^*A)^{-1}\phi_j$. Reduce
(\ref{eq:model}) to the equivalent sequence model:
\be \label{eq:sequence}
\langle  y,\psi_j\rangle_H~=~\langle  Af,\psi_j\rangle_H+\langle \varepsilon,\psi_j\rangle_H~
=~\langle f,\phi_j\rangle_G+\langle \varepsilon,\psi_j\rangle_H,
\ee
where for ill-posed problems $\Var\left(\langle y,\psi_j\rangle_H\right)= \sigma^2 \|\psi_j\|_H^2$ increases with $j$.
Following the common terminology, an inverse problem is called {\em mildly ill-posed},
if the variances increase polynomially and {\em severely ill-posed} if their growth
is exponential (see, e.g., Cavalier, 2009).

\item Estimate the unknown coefficients $\langle f,\phi_j\rangle_G$ from empirical coefficients $\langle y,\psi_j\rangle_H$:
$\widehat{\langle f,\phi_j\rangle_G}=\delta\{\langle y,\psi_j\rangle_H\}$, where $\delta(\cdot)$ is some truncating/shrinking/thresholding
procedure (see, e.g., Cavalier 2009, Section 1.2.2.2 for
a survey), and reconstruct $f$ as
$$
\hat{f}=\sum_j\widehat{\langle f,\phi_j\rangle_G}~ \phi_j
$$
\end{enumerate}

Efficient representation of $f$ in a chosen basis $\{\phi_j\}$ in (\ref{eq:series}) is essential.
In the widely-used singular value decomposition (SVD), $\phi_j$'s
are the orthogonal eigenfunctions of the self-adjoint operator $A^*A$ and
$\psi_j = \lambda_j^{-1} A\phi_j$, where $\lambda_j$ is the corresponding eigenvalue.
SVD estimators are known to be optimal in various minimax settings
over certain classes of functions (e.g., Johnstone \& Silverman, 1990; Cavalier
\& Tsybakov, 2002;  Cavalier {\em et al.}, 2002).  A serious drawback of SVD is that the basis is
defined entirely by the operator $A$ and ignores the specific properties of
the object of interest $f \in \cG$. Thus, for a given $A$, the same basis will be
used regardless of the nature of a scientific problem at hand. While the SVD-basis
could be very efficient for representing $f$ in one area, it might yield poor
approximation in the other. The use of SVD, therefore, restricts one within certain
classes $\cG$ depending on a specific operator $A$. See Donoho (1995) for further discussion.

In wavelet-vaguelette decomposition (WVD), $\phi_j$'s are orthonormal wavelet
series. Unlike SVD-basis, wavelets allows sparse representation for various classes of functions and
the resulting WVD estimators have been studied in Donoho (1995), Abramovich \&
Silverman  (1998), Kalifa \& Mallat (2003), Johnstone {\em et al.} (2004). 
However, WVD imposes relatively stringent conditions on $A$
that are satisfied only for specific types of operators, mainly of convolution type. 

A general  shortcoming of orthonormal  bases  is due to the fact that they
may be ``too coarse'' for efficient representation of unknown $f$. Since 90s, there was a growing interest in the
atomic decomposition of functions over {\em overcomplete} dictionaries
(see, for example, Mallat \& Zhang, 1993;
Chen, Donoho \& Saunders, 1999; Donoho \& Elad, 2003).
Every basis is essentially only a {\em minimal} necessary dictionary. Such ``miserly''
representation usually causes poor adaptivity (Mallat \& Zhang, 1993).
Application of overcomplete dictionaries improves adaptivity of the representation, because one can
choose now the most efficient (sparse) one among many available. One can see here an interesting
analogy with colors. Theoretically, every other color can be generated by
combining three basic colors (green, red and blue) in corresponding proportions.
However, a painter would definitely prefer to use the whole available palette
(overcomplete dictionary) to get the hues he needs!
Selection of appropriate subset of atoms (model selection) that allows a sparse
representation of a solution is a core element in such an approach. Pensky (2016)
was probably the first to use overcomplete dictionaries for solving inverse problems.
She utilized the Lasso techniques for model
selection within the overcomplete dictionary, established oracle inequalities with high probability and
applied the proposed procedure to several examples of linear inverse problems.
However, as usual with Lasso, it required
somewhat restrictive compatibility conditions on the design matrix $\Phi$.

In this paper we propose two alternative approaches for overcomplete dictionaries based estimation in  linear ill-posed problems.  
The first estimator is obtained  by minimizing penalized empirical risk with a penalty on model $M$ proportional 
to $\sum_{j \in M} \|\psi_j^2\|$.  The second one is based on a Q-aggregation type procedure that is specifically 
designed for solution of linear ill-posed problems and is different  from that of Dai {\em et al.} (2012) 
developed for solution of direct problems. We establish   oracle inequalities for both estimators that  
hold  with high probabilities and in expectation. Moreover, for the Q-aggregation estimator, the 
inequalities are sharp. 
Simulation study shows that the new techniques produce more accurate estimators than Lasso.

The rest of the paper is organized as follows. Section \ref{sec:notations} introduces the
notations and some preliminary results. The model selection and
aggregation-type procedures are studied respectively in Section \ref{sec:model_select} 
and Section \ref{sec:aggregation}. The  simulation study is described in Section \ref{sec:simulations}. All   
proofs are given in the Appendix.


%
\section{Setup and notations} \label{sec:notations}

Consider a discrete analog of a general statistical linear inverse problem (\ref{eq:model}):
\be \label{eq:discrete}
y=Af+\sigma \varepsilon,
\ee
where $y \in \mathbb{R}^n$ is the vector of observations, $f \in \mathbb{R}^n$ is
the unknown vector to be recovered, $A$ is a known (ill-posed) $n \times m\;(n \geq m)$
matrix with $rank(A)=m$, and $\varepsilon \sim N(0,I_n)$.

In what follows $\|\cdot\|$ and $\langle  \cdot,\cdot \rangle$ denote respectively a
Euclidean norm and an inner product in $\mathbb{R}^n$. Let $\phi_j \in \mathbb{R}^n,\;j=1,\cdots,p$
with $\|\phi_j\|=1$ be a set of normalized vectors (dictionary),
where typically $p > n$ (overcomplete dictionary). Let $\Phi_{n \times p}$ be the
complete dictionary matrix with the columns $\phi_j,\;j=1,\ldots,p$, and
$\Psi_{n \times p}$ is such that $A^T \Psi = \Phi$, that is, $\Psi=A(A^T A)^{-1} \Phi$ and
$\psi_j=A(A^T A)^{-1}\phi_j$.

In what follows, we assume that, for some positive $r$, the  minimal   $r$-sparse eigenvalue of $\Phi^T \Phi$  is
separated from  zero:
$$
\nu^2_r=\min_{x \in \mathbb{R}^n, \|x\|_0 \leq r} \frac{\|\Phi x\|^2} {\|x\|^2} >0
$$
and consider a set of models $\cM=\{M \subseteq \{1,\ldots,p\}: |M| \leq r/2\}$ of sizes not larger
than $r/2$.

For a given model $M \subseteq \{1,\ldots,p\}$ define a $p \times p$ diagonal indicator matrix
$D_M$ with diagonal entries  $d_{Mj}=I\{j \in M\}$.
The design matrix corresponding to $M$ is then $\Phi_M=\Phi D_M$, while
$\Psi_M=\Psi D_M=A(A^T A)^{-1} \Phi_M$.
Let $H_M=\Phi_M(\Phi_M^T \Phi_M)^{-1}\Phi_M^T$ be the projection matrix on a span of nonzero columns of $\Phi_M$ and
$$
f_M=H_M f =\Phi_M(\Phi_M^T \Phi_M)^{-1}\Phi_M^T~ f = \Phi_M(\Phi_M^T \Phi_M)^{-1}\Psi_M^T~ Af
$$
be the projection of $f$ on $M$. Denote
\be \label{eq:z_def}
z = (A^T A)^{-1}A^T y = f + \xi,\; {\rm where} \; \xi = (A^T A)^{-1}A^T \eps.
\ee

Consider the corresponding projection estimator
\be \label{eq:hatfM}
\hat{f}_M=H_M z= H_M (A^T A)^{-1}A^T y=\Phi_M(\Phi_M^T \Phi_M)^{-1}\Psi_M^T~y=\Phi_M \hat{\theta}_M,
\ee
where the vector of projection coefficients $\hat{\theta}_M=(\Phi_M^T \Phi_M)^{-1}\Psi_M^T~y$.
By straightforward calculus, $\hat{f}_M \sim N(f_M,\sigma^2H_M (A^T A)^{-1}H_M)$ and
the quadratic risk
\be \label{eq:quadratic_risk}
E\|\hat{f}_M-f\|^2=\|f_M-f\|^2+\sigma^2 \Tr\left((A^T A)^{-1}H_M\right)
\ee
The oracle model is the one that minimizes (\ref{eq:quadratic_risk}) over all
models $M \in \cM$ and the ideal oracle risk is
\be \label{eq:oracle_risk}
\Roracle=\inf_{M \in \cM} E\|\hat{f}_M-f\|^2=\inf_{M} \left\{\|f_M-f\|^2+\sigma^2 \Tr\left((A^T A)^{-1}H_M\right)\right\}
\ee
The oracle risk is obviously unachievable but can be used as
a benchmark for a quadratic risk of any available estimator.


\section{Model selection by penalized empirical risk}
\label{sec:model_select}

\ignore{
We consider only models $M \in \cM=\{M: |M| \leq r\}$ of sizes not larger
than $r$ since otherwise, there necessarily exists another model $M'$ of size at most
$r$ such that $f_M=f_{M'}$.
}
For a given model $M \in \cM$, $\hat{f}_M$ in (\ref{eq:hatfM}) minimizes the corresponding empirical risk
$\|z-\tilde{f}_M\|^2$, where $z$ was defined in (\ref{eq:z_def}). Select
a model $\widehat{M}$ by minimizing the penalized empirical risk:
\be \label{eq:mhat}
\widehat{M}=\underset{M \in \cM}{\operatorname{argmin}}\, \left\{ \|z-\hat{f}_M\|^2 +\Pen(M)\right\}
= \underset{M \in \cM}{\operatorname{argmin}}\, \left\{-\|\hat{f}_M\|^2 + \Pen(M) \right\},
\ee
where $\Pen(M)$ is a penalty function on a model $M$.
The proper choice of $\Pen(M)$ is the core of such an approach.

For {\em direct} problems ($A=I$), the penalized empirical risk approach, 
with the complexity type   penalties $\Pen(|M|)$ on a model size, has been
intensively studied in the literature. In the last decade, in the context of linear regression,
the  in-depth theories (risk bounds, oracle inequalities, minimaxity) 
have been developed    by a number of authors (see, e.g., 
Foster \& George (1994), Birg\'e \& Massart (2001, 2007), Abramovich \&
Grinshtein (2010), Rigollet \& Tsybakov (2011), Verzelen (2012) among many others).

For inverse problems, Cavalier {\em et al.} (2002) considered
a truncated orthonormal series estimator, where the cut-off point was chosen
by SURE criterion corresponding to the AIC-type
penalty $\Pen(M)=2\sigma^2 \Tr((A^T A)^{-1}H_M)$ and established oracle inequalities for the resulting estimator $\hat{f}_{\widehat M}$.
It was further generalized and improved by risk hull minimization in Cavalier \& Golubev (2006).

To the best of our knowledge, Pensky (2016) was the first to consider model selections within
{\em overcomplete dictionaries} by empirical risk minimization for statistical inverse problems.
She utilized Lasso penalty. However, as usual with Lasso, it required
somewhat restrictive compatibility conditions on the design matrix $\Phi$ (see
Pensky, 2016 for more details).

In this paper, we utilize the penalty $\Pen(M)$ that depends on the Frobenius norm
of the matrix $\Psi_M$:
$$
\|\Psi_M\|^2_F=\sum_{j \in M} \|\psi_j\|^2=\Tr\left((A^T A)^{-1}\Phi_M\Phi_M^T\right)
$$
The following theorem provides nonasymptotic upper bounds for the quadratic risk
of the resulting estimator $\hat{f}_{\widehat M}$ both with high probability and in expectation:

\begin{theorem} \label{th:oracle_mod_select}
Consider the model (\ref{eq:discrete}) and the penalized empirical risk estimator
$\hat{f}_{\widehat M}$, where the model ${\widehat M}$ is selected w.r.t. (\ref{eq:mhat}) with the penalty
\be \label{eq:penalty}
\Pen(M) \geq \frac{4 \sigma^2 (\delta + 1)}{a \nu^2_r}\ \|\Psi_M\|^2_F\ \ln p
\ee
for some $\delta > 0$ and $0 < a < 1$. Then,

\begin{enumerate}

\item
With probability at least
$1 - \sqrt{\frac{2}{\pi}}~p^{-\delta}$
\be \label{eq:oraclehp}
\|\hat{f}_{\widehat M}-f\|^2 \leq \frac{1+a}{1-a}~ \min_{M \in \cM}
\left\{ \|f_M - f\|^2 + \frac{a^2+a+4}{2(1+a)}~ \Pen(M) \right\}
\ee

\item If, in addition, we restrict the set of admissible models to
$\cM_\gamma=\{M \in \cM:\|\Psi_M\|_F^2 \leq \gamma^2 n\}$ for some constant $\gamma$,
\be \label{eq:oraclee}
E\|\hat{f}_{\widehat M}-f\|^2  \leq \frac{1+a}{1-a}~ \min_{M \in \cM_\gamma}
\left\{ \|f_M - f\|^2 +
\frac{3}{2(1+a)}~ \Pen(M) \right\} + \frac{4 \gamma^2 \sigma^2}{a(1-a)\nu^2_r}~ n p^{-\delta}
\ee

\end{enumerate}

\end{theorem}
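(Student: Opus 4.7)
The plan is the classical penalized-empirical-risk scheme adapted to the overcomplete-dictionary inverse setting. Since $\widehat M$ minimizes the penalized empirical risk and $\hat{f}_M = H_M z$ is the orthogonal projection of $z$, expanding $\|z-\hat{f}_M\|^2$ with $z=f+\xi$ yields the basic inequality
\bes
\|\hat{f}_{\widehat M}-f\|^2 - \|\hat{f}_M-f\|^2 \leq \Pen(M) - \Pen(\widehat M) + 2\langle \hat{f}_{\widehat M}-\hat{f}_M,\xi\rangle
\ees
for every $M\in\cM$. The whole task is then to control the stochastic cross term.

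Observe that $\hat{f}_{\widehat M}-\hat{f}_M=\Phi(\hat\theta_{\widehat M}-\hat\theta_M)$ has coefficient vector supported in $M\cup\widehat M$, hence is $r$-sparse, so the minimal sparse eigenvalue assumption gives $\|\hat\theta_{\widehat M}-\hat\theta_M\|\leq\nu_r^{-1}\|\hat{f}_{\widehat M}-\hat{f}_M\|$. With $W=\Phi^T\xi=\Psi^T\varepsilon$ so that $W_j\sim N(0,\sigma^2\|\psi_j\|^2)$, Cauchy--Schwarz on the support $M\cup\widehat M$ gives
\bes
|\langle\hat{f}_{\widehat M}-\hat{f}_M,\xi\rangle|\leq \nu_r^{-1}\|\hat{f}_{\widehat M}-\hat{f}_M\|\sqrt{T_p\,(\|\Psi_M\|_F^2+\|\Psi_{\widehat M}\|_F^2)},
\ees
where $T_p=\max_{j\leq p}W_j^2/\|\psi_j\|^2$. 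A Gaussian tail bound plus a union bound over $p$ scaled $\chi^2_1$ coordinates gives $P(T_p>2\sigma^2(\delta+1)\ln p)\leq\sqrt{2/\pi}\,p^{-\delta}$; call the favorable event $\Omega$. On $\Omega$ the penalty lower bound \fr{eq:penalty} converts $T_p\|\Psi_M\|_F^2/\nu_r^2$ into $(a/2)\Pen(M)$, and the same restricted-eigenvalue bound applied to the $|M|\times|M|$ principal submatrix of $\Phi_M^T\Phi_M$ yields the deterministic noise bound $\|H_M\xi\|^2\leq(a/2)\Pen(M)$. The critical step is then Young's inequality $2xy\leq\alpha x^2+y^2/\alpha$ with $\alpha=a/2$: this is the unique value for which the $+\Pen(\widehat M)$ contribution from the cross term exactly cancels the $-\Pen(\widehat M)$ from the basic inequality. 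Combining with the parallelogram bound $\|\hat{f}_{\widehat M}-\hat{f}_M\|^2\leq 2\|\hat{f}_{\widehat M}-f\|^2+2\|\hat{f}_M-f\|^2$ and rearranging yields, on $\Omega$,
\bes
(1-a)\|\hat{f}_{\widehat M}-f\|^2 \leq (1+a)\|\hat{f}_M-f\|^2 + 2\Pen(M);
\ees
inserting $\|\hat{f}_M-f\|^2=\|f_M-f\|^2+\|H_M\xi\|^2\leq\|f_M-f\|^2+(a/2)\Pen(M)$ produces the high-probability inequality \fr{eq:oraclehp}.

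For the expectation bound \fr{eq:oraclee}, I would split $E\|\hat{f}_{\widehat M}-f\|^2 = E[\,\cdot\,;\Omega] + E[\,\cdot\,;\Omega^c]$. On $\Omega$ the displayed inequality $(1-a)\|\hat{f}_{\widehat M}-f\|^2\leq(1+a)\|\hat{f}_M-f\|^2+2\Pen(M)$ still holds, but after integration the pointwise bound $\|H_M\xi\|^2\leq(a/2)\Pen(M)$ can be replaced by the much smaller mean $E\|H_M\xi\|^2=\sigma^2\Tr\bigl((A^TA)^{-1}H_M\bigr)\leq\sigma^2\|\Psi_M\|_F^2/\nu_r^2$, of order $(\log p)^{-1}$ times the penalty, which shrinks the coefficient of $\Pen(M)$ from $(a^2+a+4)/(2(1+a))$ down to $3/(2(1+a))$. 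On $\Omega^c$, the restriction $\widehat M\in\cM_\gamma$ gives $\|H_{\widehat M}\xi\|^2\leq T_p\gamma^2 n/\nu_r^2$, and a direct Gaussian-maximum moment calculation controls $E[T_p\,\mathbf{1}_{\Omega^c}]$; together with $P(\Omega^c)\leq\sqrt{2/\pi}\,p^{-\delta}$ this produces the residual $4\gamma^2\sigma^2 n p^{-\delta}/(a(1-a)\nu_r^2)$. The main obstacle throughout is the constant bookkeeping in the Young-inequality step: any other choice of $\alpha$ leaves an uncancelled $\Pen(\widehat M)$ on the right-hand side that cannot be absorbed by the sparse-eigenvalue argument, and for the expectation statement one must carefully separate the pointwise noise bound used inside $\Omega$ from the mean-level bound used after integration, since only the latter delivers the sharper constant in \fr{eq:oraclee}.
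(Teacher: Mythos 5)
Your argument for the high-probability bound \fr{eq:oraclehp} is correct and is essentially the paper's proof: the same basic inequality, the same Cauchy--Schwarz/sparse-eigenvalue control of the cross term on the support $M\cup\widehat M$, the same union-bound event (your $T_p$ statement is equivalent to the paper's Lemma 1 with $x=\delta\ln p$, and your splitting of $\|\Psi_{M\cup\widehat M}^T\varepsilon\|^2$ is the paper's Lemma 2), and the same Young-inequality constant $a/2$; your bookkeeping $(1-a)\|\hat f_{\widehat M}-f\|^2\le(1+a)\|f_M-f\|^2+\frac{a^2+a+4}{2}\Pen(M)$ reproduces \fr{eq:oraclehp} exactly.

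The expectation part, however, has a genuine gap, in two places. First, the constant: on $\Omega$ your inequality is $(1-a)\|\hat f_{\widehat M}-f\|^2\le(1+a)\|\hat f_M-f\|^2+2\Pen(M)$, where the coefficient $2$ consists of the $\Pen(M)$ from the basic inequality plus the $\Pen(M)$ obtained by bounding $\frac{2}{a\nu_r^2}T_p\|\Psi_M\|_F^2$ pointwise on $\Omega$. Replacing only the pointwise bound $\|H_M\xi\|^2\le(a/2)\Pen(M)$ by its mean removes the $(1+a)a/2$ contribution but leaves the coefficient at $2+o(1)$, not $3/2$; the mechanism you describe cannot shrink it below $2$. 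To get $\frac{3}{2}\Pen(M)$ one must, as the paper does, keep the fixed-model noise terms $\|\Psi_M^T\varepsilon\|^2$ as random variables in an inequality that holds almost surely (its eq.\ (\ref{eq:exp1})), take their plain expectations $\sigma^2\|\Psi_M\|_F^2\le\frac{a}{4(\delta+1)\ln p}\,\nu_r^2\,\Pen(M)/\nu_r^2$, which are $O(1/\ln p)$ times $\Pen(M)$, and reserve the high-probability/tail treatment solely for the random-model term $\Delta=\|\Psi_{\widehat M}^T\varepsilon\|^2-2\sigma^2(\delta+1)\ln p\,\|\Psi_{\widehat M}\|^2_F$, whose positive part has expectation of order $\sigma^2\gamma^2 n p^{-\delta}$ under the $\cM_\gamma$ restriction. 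Second, your split $E[\,\cdot\,;\Omega]+E[\,\cdot\,;\Omega^c]$ does not close on $\Omega^c$: bounding $\|\hat f_{\widehat M}-f\|^2$ there requires controlling not only the noise part $\|H_{\widehat M}\xi\|^2$ (which your $E[T_p\mathbf{1}_{\Omega^c}]$ computation handles) but also the bias part $\|f_{\widehat M}-f\|^2$, which can only be bounded by $\|f\|^2$; this produces an extra term of order $\|f\|^2p^{-\delta}$ that is absent from \fr{eq:oraclee}. The paper avoids both problems by never splitting the risk over $\Omega$ and $\Omega^c$ at all: it integrates an inequality valid everywhere and isolates $\Delta$ as the only quantity needing a tail-integral bound. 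You would need to restructure the expectation argument along these lines to obtain the stated constants and residual term.
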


The additional restriction on the set of models $\cM$ in the second part of Theorem \ref{th:oracle_mod_select}
is required to guarantee that the oracle risk in (\ref{eq:oracle_risk}) does not grow faster than $n$.

Note that for the direct problems, $\Psi_M=\Phi_M$ and the penalty (\ref{eq:penalty})
is the RIC-type complexity penalty of Foster \& George (1994) of the form
$\Pen(M)=C |M| \ln p$.

We can compare the quadratic risk of the proposed estimator with the oracle
risk $\Roracle$ in (\ref{eq:oracle_risk}). Consider the penalty
\be \label{pen_exact}
\Pen(M)=\frac{4 \sigma^2 (\delta + 1)}{a \nu^2_r}\ \|\Psi_M\|^2_F\ \ln p
\ee
for some $0 < a < 1$ and $\delta \geq 2$.
Assume that $p\geq n$ (overcomplete dictionary) and choose $\delta \geq 2$. Then,  the last term in the RHS of
(\ref{eq:oraclee}) turns to be of a smaller order and we obtain
$$
E\|\hat{f}_{\widehat M}-f\|^2 \leq C_1\min_{M \in \cM_\gamma}\left\{ \|f_M - f\|^2 +
C_2 \frac{\gamma \sigma^2}{\nu_r^2} \|\Psi_M\|^2_F \ln p\right\}
$$
for some positive constants $C_1, C_2$ depending on $a$ and $\delta$ only. By standard linear
algebra arguments, $\|\Psi_M\|^2_F=\Tr\left((A^T A)^{-1}\Phi_M \Phi_M^T\right)
\leq \kappa^2_r~ \Tr\left((A^T A)^{-1}H_M\right)$ and, therefore, the following
oracle inequality holds:

\begin{corollary} \label{cor:cor1}
Assume that $p \geq n$ and consider the penalized empirical risk estimator
${\widehat M}$ from Theorem \ref{th:oracle_mod_select}, where ${\widehat M}$ is selected
w.r.t. (\ref{eq:mhat}) over $\cM_\gamma$  with the penalty \eqref{pen_exact}
for some $0 < a < 1$ and $\delta \geq 2$. Then,
$$
E\|\hat{f}_{\widehat M}-f\|^2 \leq C_0 \frac{\kappa^2_r}{\nu^2_r} \ln p~ \Roracle
$$
for some constant $C_0>0$ depending on $a$, $\delta$ and $\gamma$ only.
\end{corollary}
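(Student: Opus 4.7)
The plan is to simply specialize Theorem~\ref{th:oracle_mod_select}(ii) to the particular penalty (\ref{pen_exact}) and then convert the bound on $\|\Psi_M\|^2_F$ into one on $\Tr((A^TA)^{-1}H_M)$ via the maximal $r$-sparse eigenvalue of $\Phi^T\Phi$ (call it $\kappa_r^2$). First I would plug (\ref{pen_exact}) into (\ref{eq:oraclee}). Combining the outer multiplier $\frac{1+a}{1-a}$ with the inner multiplier $\frac{3}{2(1+a)}$ and the penalty constant yields
\bes
E\|\hatf_{\widehat M}-f\|^2\leq \frac{1+a}{1-a}\min_{M\in\cM_\gamma}\lfi \|f_M-f\|^2 + \frac{6(\delta+1)}{a(1+a)}\,\frac{\sigma^2\ln p}{\nu_r^2}\|\Psi_M\|_F^2\rfi + \frac{4\gamma^2\sigma^2}{a(1-a)\nu_r^2}\,np^{-\delta}.
\ees

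Next I would dispose of the additive remainder: since $p\geq n$ and $\delta\geq 2$, we have $np^{-\delta}\leq p^{1-\delta}\leq p^{-1}$, so this term is a constant (depending only on $a,\delta,\gamma,\sigma,\nu_r$) divided by $p$, and in particular grows far more slowly in $p$ than the $\ln p$ factor multiplying the principal term. It is therefore absorbed into the final constant $C_0$, using the fact that either $R(\mathrm{oracle})$ provides enough slack via its variance component $\sigma^2\Tr((A^TA)^{-1}H_M)$ at any nonempty model, or one simply enlarges $C_0$.

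The key structural step is the linear-algebra inequality
\bes
\|\Psi_M\|_F^2 = \Tr\bigl((A^TA)^{-1}\Phi_M\Phi_M^T\bigr) \leq \kappa_r^2\,\Tr\bigl((A^TA)^{-1}H_M\bigr),
\ees
valid for $|M|\leq r/2$. To see this, write the SVD $\Phi_M=U\Sigma V^T$ on the nonzero columns so that $\Phi_M\Phi_M^T=U\Sigma^2U^T$ and $H_M=UU^T$; since all nonzero singular values of $\Phi_M$ are bounded by $\kappa_r$, we get $\Phi_M\Phi_M^T\preceq\kappa_r^2 H_M$ in the PSD order, and the trace inequality follows because $(A^TA)^{-1}$ is PSD. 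Substituting this upper bound into the previous display replaces $\|\Psi_M\|_F^2$ by $\kappa_r^2 \Tr((A^TA)^{-1}H_M)$, so that the bracketed quantity is dominated by $\max\{1,\,C'\kappa_r^2\ln p/\nu_r^2\}$ times $\{\|f_M-f\|^2+\sigma^2\Tr((A^TA)^{-1}H_M)\}$, which is exactly the summand of $R(\mathrm{oracle})$.

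Taking the minimum over $M\in\cM_\gamma$ and recalling the definition (\ref{eq:oracle_risk}) gives the advertised bound $E\|\hatf_{\widehat M}-f\|^2 \leq C_0\,(\kappa_r^2/\nu_r^2)\ln p\cdot\Roracle$. The one subtle point, which I regard as the main obstacle rather than a serious difficulty, is that the infimum on our side runs over $\cM_\gamma\subseteq\cM$ whereas $\Roracle$ is the infimum over all of $\cM$; I would address this by taking $\gamma$ large enough (which is permitted since it enters only through $C_0$) so that the oracle-optimal model lies in $\cM_\gamma$, which is exactly the role the excerpt ascribes to the restriction to $\cM_\gamma$ (keeping the oracle risk of order $n$). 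With this taken care of, the reduction of $\min_{\cM_\gamma}$ to $\min_\cM$ is costless and the corollary follows.
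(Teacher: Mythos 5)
Your proof follows essentially the same route as the paper: specialize Theorem \ref{th:oracle_mod_select}(ii) with the penalty \eqref{pen_exact}, observe that the remainder term is of lower order since $np^{-\delta}\le p^{-1}$ for $p\ge n$ and $\delta\ge 2$, and apply $\|\Psi_M\|_F^2\le\kappa_r^2\,\Tr\bigl((A^TA)^{-1}H_M\bigr)$ to recognize $\Roracle$. The only differences are that you spell out the SVD argument behind this linear-algebra inequality (which the paper dismisses as ``standard'') and explicitly flag the $\cM_\gamma$-versus-$\cM$ mismatch in the oracle minimum, a point the paper passes over silently; both additions are consistent with, not a departure from, the paper's argument.
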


\noindent
Thus, the quadratic risk of the proposed estimator $\hat{f}_{\widehat M}$ is
within $\ln p$-factor of the ideal oracle risk. 
 The $\ln p$-factor is a common closest rate at which an estimator can approach an oracle   
even in direct   (complete) model selection problems 
%
(see, e.g., Donoho \& Johnstone, 1994;
Birg\'e \& Massart, 2001; Abramovich \& Grinshtein, 2010; Rigollet \& Tsybakov, 2011;
and also Pensky (2016) for inverse problems). For an {\em ordered} model selection
within a set of nested models, it is possible to construct estimators that achieve the oracle risk within a constant factor
(see, e.g., Cavalier {\em et al.},  2002 and Cavalier \& Golubev, 2006).

Similar oracle inequalities (even sharp with the coefficient in front
of $\|f_M - f\|^2$ equals to one) with high probability were obtained for the Lasso estimator but under the additional compatibility assumption on the
matrix $\Phi$ (Pensky, 2016).


\section{Q-aggregation}
\label{sec:aggregation}

Note that inequalities \eqref{eq:oraclehp} and \eqref{eq:oraclee} in Theorem \ref{th:oracle_mod_select} for model selection estimator
are not sharp in the sense that the coefficient in front of the minimum is greater than one.
In order to  derive sharp oracle inequalities both in probability and expectation, one needs to aggregate the entire collection of
estimators: $\hat{f}=\sum_{M \in \cM} \te_M \hat{f}_M$ rather than to select a single estimator $\hat{f}_{\widehat M}$.

Leung \& Barron (2006) considered exponentially weighted aggregation (EWA) with
$\theta_M \propto \pi_M \exp\{\|z-\hat{f}_M\|^2/\beta\}$, where $\pi$ is some  (prior)
probability measure on $\cM$ and $\beta>0$ is a tuning parameter.
They established sharp oracle inequalities in expectation for EWA in direct problems.
Dalalyan \& Salmon (2012) proved sharp oracle inequalities in expectation for EWA of affine
estimators in nonparametric regression model. Their paper offers limited extension of the theory
to the case of  mildly ill-posed inverse problems where $\Var(\langle y,\psi \rangle_H)=\sigma^2 \|\psi_j\|^2_H$
increase at most polynomially with $j$. However, their results are valid only for the SVD decomposition
and require block design which seriously limits the scope of application of their theory.
Moreover, Dai {\em et al.} (2012) argued that EWA cannot satisfy sharp oracle inequalities with high probability and
proposed instead to use Q-aggregation.

Define a general Q-aggregation estimator of $f$ as
\be \label{eq:fhattheta}
 \hat{f}_{\hat \te}=\sum_{M \in \cM}  \te_M \hat{f}_M,
\ee
where the vector of weights $\hat{\te}$ is the solution of the following optimization
problem:
\be \label{eq:Q_weights}
\hat{\theta}= \underset{\theta \in \Theta_{\cM}}{\operatorname{argmin}}\,  \left\{ \alpha \sum_{M \in \cM}
\theta_M \|z-\hat{f}_M\|^2+(1-\alpha) \Big\|z-\sum_{M \in \cM} \theta_M \hat{f}_M\Big\|^2 + \Pen(\theta) \right\}
\ee
for some $0 \leq \alpha \leq 1$ and a penalty $\Pen(\theta)$, and $\Theta_{\cM}$ is  the simplex
\be \label{eq:cone}
\Theta_{\cM}=\{\theta \in \mathbb{R}^{card(\cM)}: \theta_M \geq 0,\;
\sum_{M \in \cM} \theta_M=1\}
\ee
In particular, Dai {\em et al.} (2012) considered   $\Pen(\theta)$ proportional to the 
Kullback-Leibler divergence $KL(\theta,\pi)$ for some prior $\pi$ on
$\Theta_{\cM}$. For direct problems, they derived sharp oracle inequalities both 
in expectation and with high probability for Q-aggregation with such   penalty. In fact,
EWA can also be viewed as an extreme case of Q-aggregation
of Dai {\em et al.} for $\alpha=1$ (see Rigolette \& Tsybakov, 2012).
However, the results for Q-aggregation with Kullback-Leibler-type penalty are not valid for ill-posed problems.

In this section we propose a different type of penalty for Q-aggregation in (\ref{eq:Q_weights})
that is specifically designed for the solution of inverse problems. In particular, this penalty
allows one to obtain sharp oracle inequalities both in expectation and with high probability in both mild and
severe ill-posed linear inverse problems. Namely, we consider the penalty
\be \label{eq:Q-penalty}
\Pen(\te)=\frac{4\sigma^2\, (\delta + 1) \ln p}{ \nu^{2}_r}
\sum_{M \in \cM} \theta_M \|\Psi_M\|^2_F
\ee
with a tuning parameter $\delta>0$.
For such a penalty and $\alpha=1/2$, the resulting $\hat{\theta}$ is
\be \label{eq:opt_weights}
\hat{\theta}= \underset{\theta \in \Theta_{\cM}}{\operatorname{argmin}}\,  \left\{\sum_{M \in \cM}
\theta_M \left(\|z-\hat{f}_M\|^2+ \frac{8\sigma^2\, (\delta + 1) \ln p}{ \nu^{2}_r}\,
 \|\Psi_M\|^2_F  \right)+ \Big\|z-\sum_{M \in \cM} \theta_M\hat{f}_M\Big\|^2
\right\}
\ee
Note that the first term in the minimization criteria (\ref{eq:opt_weights}) is the same
as in model selection (\ref{eq:mhat}) with the penalty (\ref{eq:penalty}) for $a=1/2$.
The presence of the second term is inherent for Q-aggregation.  In fact, the model
selection estimator $\hat{f}_{\widehat M}$ from Section \ref{sec:model_select} is
a particular case of a Q-aggregate estimator $\hat{f}_{\hat \te}$ with the weights 
obtained  by solution of problem   \eqref{eq:Q_weights} with $\alpha =1$. 

The  non-asymptotic upper bounds for the quadratic risk
of  $\hat{f}_{\hat \te}$, both with high probability and in expectation,
are given by the following theorem~:

\begin{theorem} \label{th:oracle_aggregation}
Consider the model (\ref{eq:discrete}) and the Q-aggregate estimator
$\hat{f}_{\te}$ given by \eqref{eq:fhattheta}, where the weights ${\te}$ are selected as a solution of the optimization problem
\eqref{eq:opt_weights}. Then,

\begin{enumerate}

\item
With probability at least
$1 - \sqrt{\frac{2}{\pi}}~p^{-\delta}$
\be \label{eq:new_oraclehp}
\|\hat{f}_{\hat{\te}}-f\|^2 \leq   \min_{M \in \cM}
\left\{ \|f_M - f\|^2 + 10 \nu^{-2}_r\, \sigma^2 (\delta + 1)\ \|\Psi_M\|^2_F\ \ln p \right\}
\ee
\ignore{
\be \label{eq:new_oraclehp}
\|\hat{f}_{\hat{\te}}-f\|^2 \leq   \min_{M \in \cM}
\left\{ \|f_M - f\|^2 + \frac{5}{2}~ U_M \right\}
\ee
}

\item If, in addition, we restrict the set of admissible models to
$\cM_\gamma=\{M \in \cM:\|\Psi_M\|_F^2 \leq \gamma^2 n\}$ for some constant $\gamma$,
then
\be \label{eq:new_oraclee}
E\|\hat{f}_{\widehat M}-f\|^2  \leq  \min_{M \in \cM_\gamma}
\left\{ \|f_M - f\|^2 +    \nu^{-2}_r\, \sigma^2 \|\Psi_M\|^2_F\, \lkr 3 +  2 (\delta + 1)\, \ln p  \rkr \right\} + 2 \gamma^2\, n p^{-\delta}
\ee
\ignore{
and choose $\delta \geq 2(1 + \ln n/\ln p)$, then
\be \label{eq:new_oraclee}
E\|\hat{f}_{\widehat M}-f\|^2  \leq  \min_{M \in \cM_\gamma}
\left\{ \|f_M - f\|^2 +   U_M  \lkr 1 + \frac{\nu^2_r}{2 (\delta + 1)\, \ln p} \rkr + \frac{\gamma}{2} \right\}
\ee
}
\end{enumerate}

\end{theorem}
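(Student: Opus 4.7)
My plan combines the first-order optimality of the Q-criterion \eqref{eq:opt_weights} with a Jensen-type identity and Gaussian concentration for $\xi$. Let $B_M:=\frac{8\sigma^2(\delta+1)\ln p}{\nu_r^2}\|\Psi_M\|_F^2$ denote the penalty weights and $\hat f:=\sum_M\hat\theta_M\hat f_M$. Since $\hat\theta$ minimises the Q-criterion over the simplex $\Theta_\cM$ and each Dirac mass $e_{M_0}$ is feasible, comparing the criterion at $\hat\theta$ and at $e_{M_0}$, and using the elementary identity $\sum_M\theta_M\|z-\hat f_M\|^2=\|z-\hat f_\theta\|^2+\sum_M\theta_M\|\hat f_\theta-\hat f_M\|^2$, yields the deterministic inequality
\[
2\|z-\hat f\|^2+\sum_M\hat\theta_M\|\hat f-\hat f_M\|^2+\sum_M\hat\theta_M B_M\;\le\;2\|z-\hat f_{M_0}\|^2+B_{M_0}\qquad\forall\,M_0\in\cM.
\]

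I would then convert this into a bound on $\|f-\hat f\|^2$. Writing $z=f+\xi$, the identity $\|z-g\|^2-\|z-h\|^2=\|f-g\|^2-\|f-h\|^2-2\langle\xi,g-h\rangle$ together with Pythagoras $\|f-\hat f_{M_0}\|^2=\|f-f_{M_0}\|^2+\|H_{M_0}\xi\|^2$ (since $f-f_{M_0}\perp\mathrm{range}(H_{M_0})$) gives
\[
2\|f-\hat f\|^2\;\le\;2\|f-f_{M_0}\|^2+2\|H_{M_0}\xi\|^2+4\langle\xi,\hat f-\hat f_{M_0}\rangle+B_{M_0}-\sum_M\hat\theta_M B_M-\sum_M\hat\theta_M\|\hat f-\hat f_M\|^2.
\]
The proof then reduces to a uniform concentration statement. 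The key bound I require, which is essentially the same as the one already used in the proof of Theorem~\ref{th:oracle_mod_select}, is that with probability at least $1-\sqrt{2/\pi}\,p^{-\delta}$, for all $M,M'\in\cM$,
\[
\|H_{M\cup M'}\xi\|^2\;\le\;\tfrac{\sigma^2(\delta+1)\ln p}{\nu_r^2}\,\|\Psi_{M\cup M'}\|_F^2\;\le\;\tfrac{1}{8}(B_M+B_{M'}),
\]
which follows from Gaussian tails for $\xi\sim N(0,\sigma^2(A^TA)^{-1})$, the inequality $\|H_{M\cup M'}\xi\|^2\le\nu_r^{-2}\|\Phi_{M\cup M'}^T\xi\|^2$ (whose right-hand side is a low-dimensional Gaussian quadratic form of trace at most $\sigma^2\|\Psi_{M\cup M'}\|_F^2$), and a union bound using $\ln|\cM|\le(r/2)\ln p$. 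Applied to $\|H_{M_0}\xi\|^2$, and combined via Cauchy--Schwarz with the decomposition $\hat f-\hat f_{M_0}=\sum_M\hat\theta_M(\hat f_M-\hat f_{M_0})$ and the identity $\sum_M\hat\theta_M\|\hat f_M-\hat f_{M_0}\|^2=\sum_M\hat\theta_M\|\hat f_M-\hat f\|^2+\|\hat f-\hat f_{M_0}\|^2$, the stochastic contributions are absorbed into the LHS budgets $\sum_M\hat\theta_M\|\hat f-\hat f_M\|^2$ and $\sum_M\hat\theta_M B_M$ and the surplus $B_{M_0}$ on the RHS through a carefully tuned AM--GM split, producing \eqref{eq:new_oraclehp} with overall constant $10$.

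For \eqref{eq:new_oraclee} I would integrate the high-probability bound: on the good event use \eqref{eq:new_oraclehp}, and on the complementary event of probability at most $\sqrt{2/\pi}\,p^{-\delta}$ invoke the restriction $M\in\cM_\gamma$ (so $\|\Psi_M\|_F^2\le\gamma^2n$) together with a crude deterministic bound $\|\hat f-f\|^2\le O(\gamma^2 n)$, producing the remainder $2\gamma^2np^{-\delta}$. The non-logarithmic summand $3$ in the coefficient of $\|\Psi_M\|_F^2$ in \eqref{eq:new_oraclee} arises by retaining $E\|H_M\xi\|^2\le\sigma^2\|\Psi_M\|_F^2/\nu_r^2$ separately from its tail before integrating. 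I expect the main obstacle to be the sharpness of the AM--GM step just described: keeping the leading coefficient on $\|f-f_M\|^2$ equal to $1$ requires every splitting of $\langle\xi,\hat f-\hat f_{M_0}\rangle$ to be compensated \emph{exactly} by the Q-quadratic $\sum_M\hat\theta_M\|\hat f-\hat f_M\|^2$ on the LHS and the reserved portion of $\sum_M\hat\theta_M B_M$, which is the structural reason for both the choice $\alpha=\tfrac12$ in \eqref{eq:Q_weights} and the factor $8$ rather than $4$ in the penalty weights $B_M$.
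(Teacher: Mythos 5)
Your central deterministic inequality is obtained by comparing the Q-criterion at $\hatte$ with its value at the vertex $e_{M_0}$, and this is exactly where the argument breaks: the vertex (zeroth-order) comparison only puts $\frac12\sum_{M}\hatte_M\|\hatf_{\hatte}-\hatf_M\|^2$ on your side of the ledger, whereas the absorption step you describe needs the larger budget $\frac12\sum_M\hatte_M\|\hatf_M-\hatf_{M_0}\|^2=\frac12\sum_M\hatte_M\|\hatf_{\hatte}-\hatf_M\|^2+\frac12\|\hatf_{\hatte}-\hatf_{M_0}\|^2$. Indeed, the noise term must be split per model, $2\sigma\langle\xi,\hatf_{\hatte}-\hatf_{M_0}\rangle=\sum_M\hatte_M\,2\sigma\langle\xi,\hatf_M-\hatf_{M_0}\rangle$, and each summand is paid for by $2\sigma^2\nu_r^{-2}\|\Psi_{M\cup M_0}^T\eps\|^2+\frac12\|\hatf_M-\hatf_{M_0}\|^2$; the missing piece $\frac12\|\hatf_{\hatte}-\hatf_{M_0}\|^2$ is not controlled by $B_{M_0}$, by $\sum_M\hatte_M B_M$, or by anything else in your inequality (think of $\hatte$ concentrated on a single model $M_1\neq M_0$: your quadratic budget vanishes while $\|\hatf_{M_1}-\hatf_{M_0}\|^2$ can be arbitrarily large), and bounding it by $2\|\hatf_{\hatte}-f\|^2+2\|f-\hatf_{M_0}\|^2$ inflates the constant in front of $\|f_{M_0}-f\|^2$ above one, i.e.\ degrades the result to a non-sharp Theorem~\ref{th:oracle_mod_select}-type bound. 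The paper avoids this precisely by using genuine first-order optimality: it evaluates the criterion along $\te=(1-\beta)\hatte+\beta e_{M_0}$ and lets $\beta\to0$, which produces the extra negative term $-\frac12\|\hatf_{\hatte}-\hatf_{M_0}\|^2$ and hence, via Lemma~\ref{lem:lemma3}, the full budget $-\frac12\sum_M\hatte_M\|\hatf_M-\hatf_{M_0}\|^2$ appearing in $\Delta_0$ of Lemma~\ref{lem:lemma4}; this is the structural reason the constant $1$ (and the factor $\frac52 U_{M_0}=10\nu_r^{-2}\sigma^2(\delta+1)\|\Psi_{M_0}\|_F^2\ln p$) comes out. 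You announce ``first-order optimality'' but never actually use it, so as written the proof does not go through.

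Two secondary problems. First, your concentration step via a union bound over models with $\ln|\cM|\le (r/2)\ln p$ does not give probability $1-\sqrt{2/\pi}\,p^{-\delta}$ at threshold $2\sigma^2(\delta+1)\|\Psi_{M\cup M'}\|_F^2\ln p$: the quadratic form $\|\Psi_{M\cup M'}^T\eps\|^2$ can have its trace dominated by a single coordinate, so its tail at that threshold is only about $p^{-c(\delta+1)}$, which cannot beat a union over exponentially many models. The correct (and cheaper) route is the paper's Lemma~\ref{lem:lemma1}: control the $p$ coordinatewise events $(\psi_j^T\eps)^2\le 2\sigma^2\|\psi_j\|^2(1+\delta)\ln p$ simultaneously; these imply the bound for every $M$ at once with the stated probability. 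Second, for the expectation bound you cannot use ``a crude deterministic bound $\|\hatf_{\hatte}-f\|^2\le O(\gamma^2 n)$'' on the bad event, since this quantity contains unbounded Gaussian noise; the paper instead bounds $E(\Delta_0)$ directly, computing $E\|\Psi_{M_0}^T\eps\|^2=\sigma^2\|\Psi_{M_0}\|_F^2$ exactly (which is where the non-logarithmic ``$3$'' comes from) and integrating the tail of the excess term $\Delta_{02}$ via Lemma~\ref{lem:lemma1}, which yields the remainder $2\gamma^2 np^{-\delta}$ after using $\|\Psi_M\|_F^2\le\gamma^2 n$.
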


Observe that unlike Theorem \ref{th:oracle_mod_select} for model selection estimator, inequalities in both \eqref{eq:new_oraclehp} and \eqref{eq:new_oraclee}
for Q-aggregation are sharp.


\section{Simulation study}
\label{sec:simulations}

In this section we present   results of a simulation study that
illustrates the performance of the model selection estimator ${\widehat M}$ from (\ref{eq:mhat}) with the penalty (\ref{eq:penalty}) and
the Q-aggregation estimator $\hat{f}_{\hat \theta}$ given by \eqref{eq:fhattheta}  where the weights are defined in (\ref{eq:opt_weights}).

The data were generated w.r.t. a (discrete) ill-posed statistical linear problem (\ref{eq:discrete})  corresponding to
the convolution-type operator $Af(t)=\int_0^t e^{-(t-x)}f(x) dx,\; 0 \leq t \leq 1$,  on a regular grid $t_i=i/n$:
$$
A_{ij}=e^{-\frac{i-j}{n}}\, I(j \leq i),\;\;\;i,j=1,\ldots,n,
$$
where $I(\cdot)$ is the indicator function and $n=128$.

We considered the dictionary obtained by combining two wavelet bases of different type: 
the  Daubechies 8  wavelet basis $\{\phi^D_{3,0},...,\psi^D_{3,0},...,\psi^D_{6,63}\}$ and the 
Haar basis $\{\phi^H_{3,0},...,\psi^H_{3,0},...,\psi^H_{5,53}\}$, with 
the overall dictionary size $p=128 +64 = 192$. In our notations, $\phi^D$ and $\phi^H$ are the scaling 
functions, while $\psi^D$ and $\psi^H$ are the wavelets functions of the Daubechies and Haar bases respectively 
with the initial resolution level  $J_0=3$.

In order to investigate the behavior of the estimators, we considered various sparsity and noise levels. 
In particular,  we used four test functions, presented in Figure \ref{fig:plots}, that correspond  to different sparsity scenarios: 
\begin{enumerate}
\item $f_1=\phi^D_{3,4} + \phi^H_{3,0}$ (high sparsity)
\item $f_2 = \phi^D_{3,0} + \phi^D_{3,6} + \psi^D_{3,7} +\phi^H_{3,6}$
(moderate sparsity)
\item $f_3 =  \phi^D_{3,1} + \phi^D_{3,5} + \phi^D_{3,7} + \psi^D_{3,0} + \psi^D_{3,3} + \psi^D_{3,5} + \phi^H_{3,0} + \phi^H_{3,3}$ (low sparsity)
\item $f_4$ is the well-known HeaviSine function from Donoho \& Johnstone (1994)
(uncontrolled sparsity)
\end{enumerate}
%
%
For each test function, we used three different values of   $\sigma$ that were chosen to ensure a signal-to-noise ratios $SNR=10, 7, 5$, where 
$SNR(f)=\|f\|/\sigma$.

The accuracy of each estimator was measured by its
relative integrated error:
$$
R(\hat{f}) =  \| f - \hat{f}\|^2 / \| f \|^2
$$
Since the model selection estimator $\hat{f}_{\widehat M}$ involves minimizing a cost function of the form $-\|\hat{f}_M\|^2 +  4 \sigma^2 \lambda \ln p \|\Psi_M\|^2_F$ over the entire model space $\cM$ of a very large size,  
we used a Simulated Annealing (SA) stochastic optimization algorithm for an approximate solution.
The SA algorithm is a kind of a Metropolis sampler where the acceptance probability is ``cooled down'' by a synthetic temperature
parameter (see Br\'emaud 1999, Chapter 7, Section 8). More precisely, if $M^{(r)}$ is a  solution at step $r > 0$ of the algorithm, 
at step $r+1$ a tentative solution $M^*$ is selected according to a given symmetric proposal distribution  and it is accepted with probability
\begin{equation} \label{eq:accprob}
a(M^*,M^{(r)})= \min \left\{ 1, \exp \left(- \frac{\pi(M^*)-\pi(M^{(r)})}{T^{(r)}} \right)\right\}.
\end{equation}
where $T^{(r)}$ is a temperature parameter at step $r$.
The expression (\ref{eq:accprob}) is motivated by the fact that while $M^*$ is always accepted if $\pi(M^{(r)}) \geq \pi(M^*)$, 
it can still be accepted even if $\pi(M^{(r)}) < \pi(M^*)$ in spite of being worse than the current one. The chance of acceptance of 
$M^*$ for the same value of $\pi(M^{(r)})- \pi(M^*) <0$ diminishes at every step as the  temperature $T^{(r)}$ decreases with $r$.
The law that reduces the temperature is called the cooling schedule, in  particular, here we choose  $T^{(r)}=1/(1+ \log(r))$.

In this paper we adopted the classical symmetric uniform proposal distribution and selected a starting solution 
$M^{(0)}$ according to the following initial probability 
\begin{equation} \label{eq:birthprop}
p(j)=C \exp\{  \langle \psi_j, y \rangle^2-c \|\psi_j\|^2 \}, \quad \mbox{for } j=1,\ldots,p
\end{equation}
where $c= \sum_j \langle \psi_j, y \rangle^2 / \sum_j \|\psi_j\|^2$ and 
$C= \left[\sum_{j=1}^p \exp\{  \langle \psi_j ,y \rangle ^2-c \|\psi_j\|^2 \}\right]^{-1}$ is the normalizing constant.
Observe that the argument in the exponent in (\ref{eq:birthprop}) is the difference of
$  \langle \psi_j, y \rangle ^2 $ and $\|\psi_j\|^2$ where the first term  $\langle \psi_j ,y \rangle ^2$   is the squared $j$-th empirical coefficient 
 while the second term  $\|\psi_j\|^2$  is the increase in the variance due to the addition of the  $j$-th dictionary function.
Hence, the prior $p(j)$ is more likely to choose  dictionary functions with   small variances that are highly correlated to the true function $f$.

Thus, the adopted SA procedure can be summarized as follows:\begin{itemize}

\item generate a random number $m \leq n/\log(p)$. Set  $T^{(1)}=1$ 

\item  generate a starting solution $M^{(0)}$ with $\card(M^{(0)})=m$ by sampling indices $j \in  \{ 1, \ldots, p \}$ 
 according to the probability given by equation (\ref{eq:birthprop})


\item  repeat for $r=1,2,...r_{max}$
\begin{enumerate}

\item    generate a variable $j^* \sim\ \mbox{Uniform}\, (1,...,p)$  

\item \emph{if} $j^* \notin M^{(r)}$ propose $M^{*}=M^{(r)} \cup \{j^* \}$

      \emph{else}

               propose  $M^{*}=M^{(r)}-\{j^* \}$


\item  with probability $a(M^{*},M^{(r)})$ given in equation (\ref{eq:accprob}) assign $M^{(r+1)}=M^*$, otherwise $M^{(r+1)}=M^{(r)}$

\item update the temperature parameter  $T^{(r+1)}=1/(1+\log(r+1))$
\end{enumerate}

\end{itemize}

\noindent
While various stopping criteria could be used in the SA procedure, we found $r_{max}=100,000$ to be sufficient
for obtaining a good approximation of the global minimum in (\ref{eq:mhat}). Once the algorithm is terminated, we evaluated 
$\hat{f}_{\widehat M}$, where  $\widehat{M}=\arg \min_{0 \leq r \leq r_{max}}\pi(M^{(r)})$ was the ``best'' model in the chain of models
generated by SA algorithm.

Similarly, the Q-aggregation estimator $\hat{f}_{\hat \theta}$ involves
computationally expensive aggregation of estimators over the entire model space $\cM$.
We, therefore, approximated it by aggregating over the subset $\cM^{'}$ of the last 50 ``visited'' models in the SA chain, i.e.
$\hat{f}_{\hat \theta} = \sum_{M \in \cM^{'}} \hat{\theta}_M \hat{f}_{M}$ with $\hat{\theta}$ being a solution of (\ref{eq:opt_weights}).

\begin{figure} 
\centering
\includegraphics[width=1\textwidth]{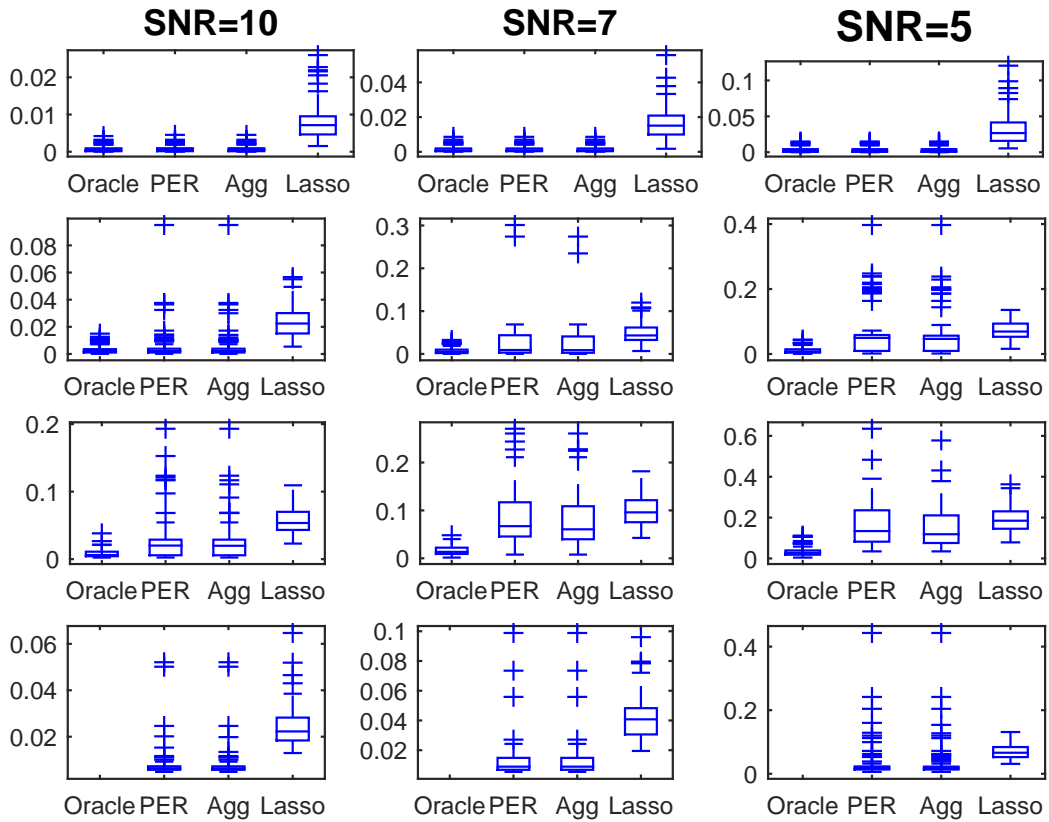}
 \caption{The   boxplots of the relative integrated errors of $\hat{f}_{oracle}$, $\hat{f}_{\widehat M}$, $\hat{f}_{\theta}$  and $\hat{f}_{Lasso}$
over 100 independent   runs. Top row: $f_1$, second row: $f_2$, third row $f_3$, bottom row $f_4$. 
}
\label{fig:boxplot}
\end{figure}

For $f_1$, $f_2$ and $f_3$ we also considered the oracle projection estimator $\hat{f}_{oracle}$ based on the true model.
In addition, we compared the proposed estimators with the Lasso-based estimator 
$\hat{f}_{Lasso}=\sum_{j=1}^p  \hat{\theta}_j  \phi_j$ of Pensky (2016), 
where the vector of coefficients $\hat{\theta}$ is a solution of the following 
optimization problem 
$$
\hat{\theta} =   \arg\min_{\theta}    \lfi     \left\| \sum_{j=1}^p  \theta_j  \phi_j \right\|^2  - 
2 \sum_{j=1}^p   \theta_j \langle   y,   \psi_j \rangle + 
\lambda \sum_{j=1}^p   |\theta_j|\,  \| \psi_j \|^2    \rfi,
$$
and $\lambda$ is a tuning parameter.

The tuning parameters $\lambda$ for $\hat{f}_{\widehat M}$ and $\hat{f}_{Lasso}$,
were chosen by minimizing the error on a grid of possible values. To
reduce heavy computational costs we used the same $\lambda$ of
$\hat{f}_{\widehat M}$ for all 50 aggregated models used for calculating $\hat{f}_{\hat \theta}$.

\begin{figure} 
\centering
\includegraphics[width=0.45\linewidth]{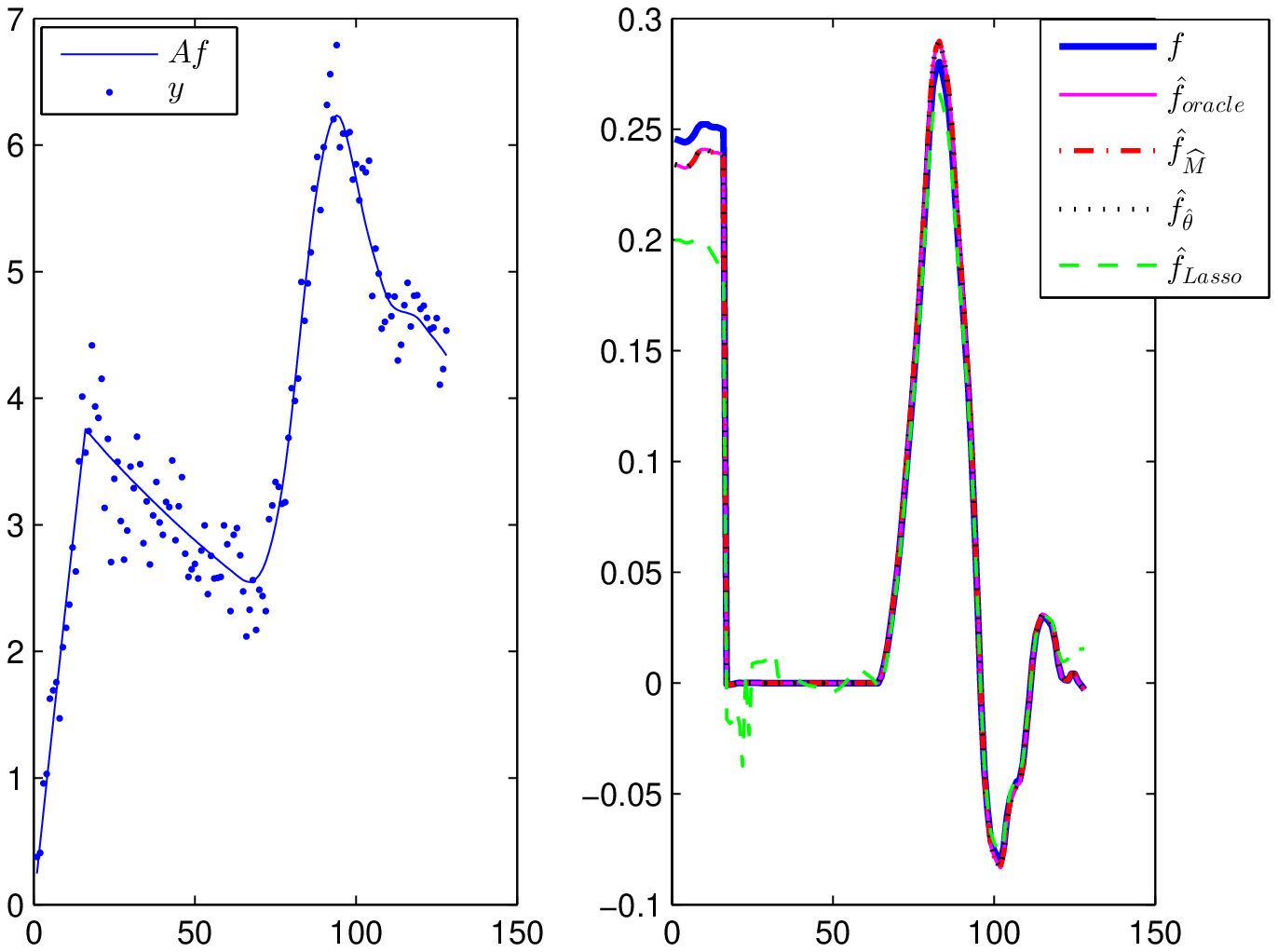}
\includegraphics[width=0.45\linewidth]{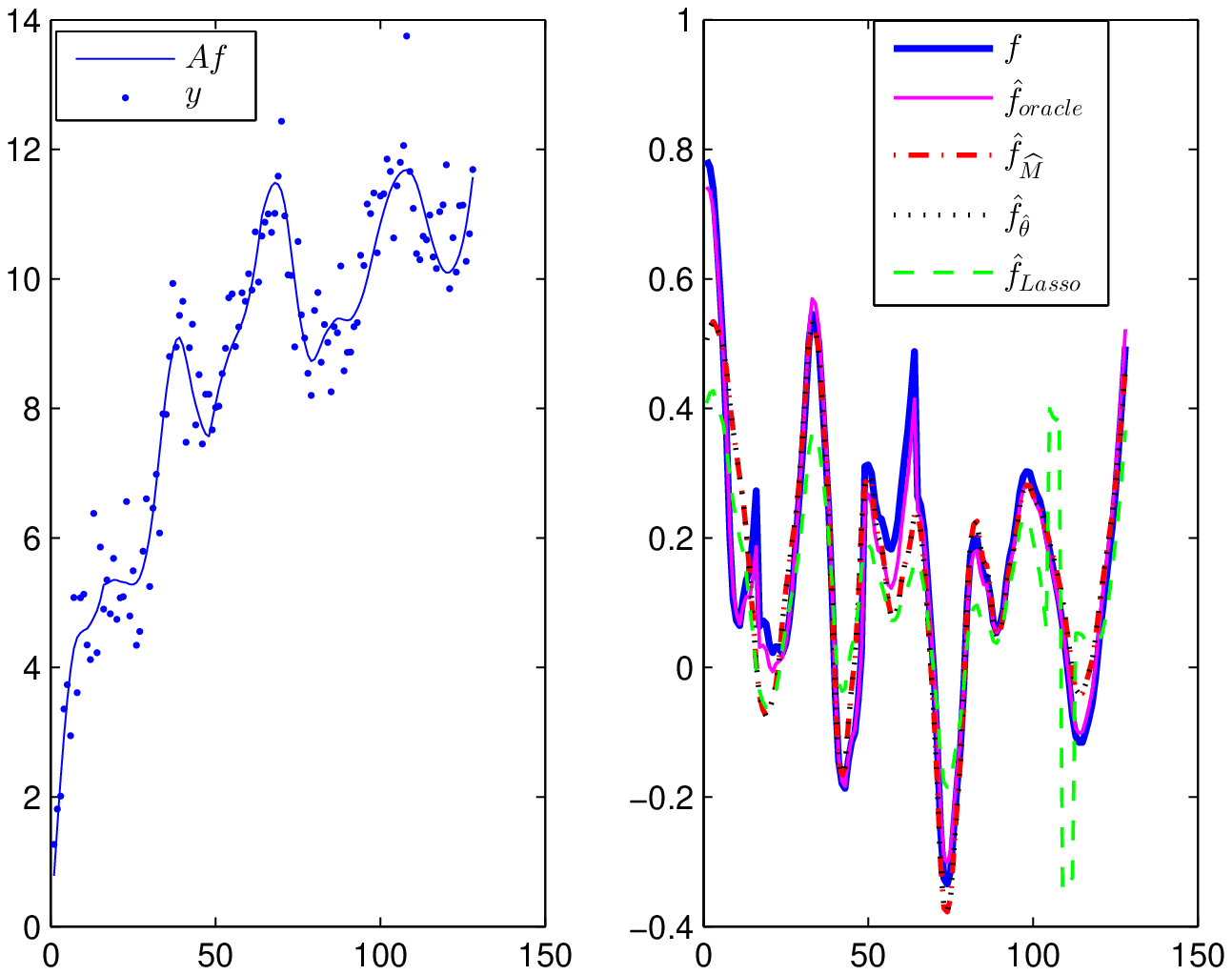}\\
\includegraphics[width=0.45\linewidth]{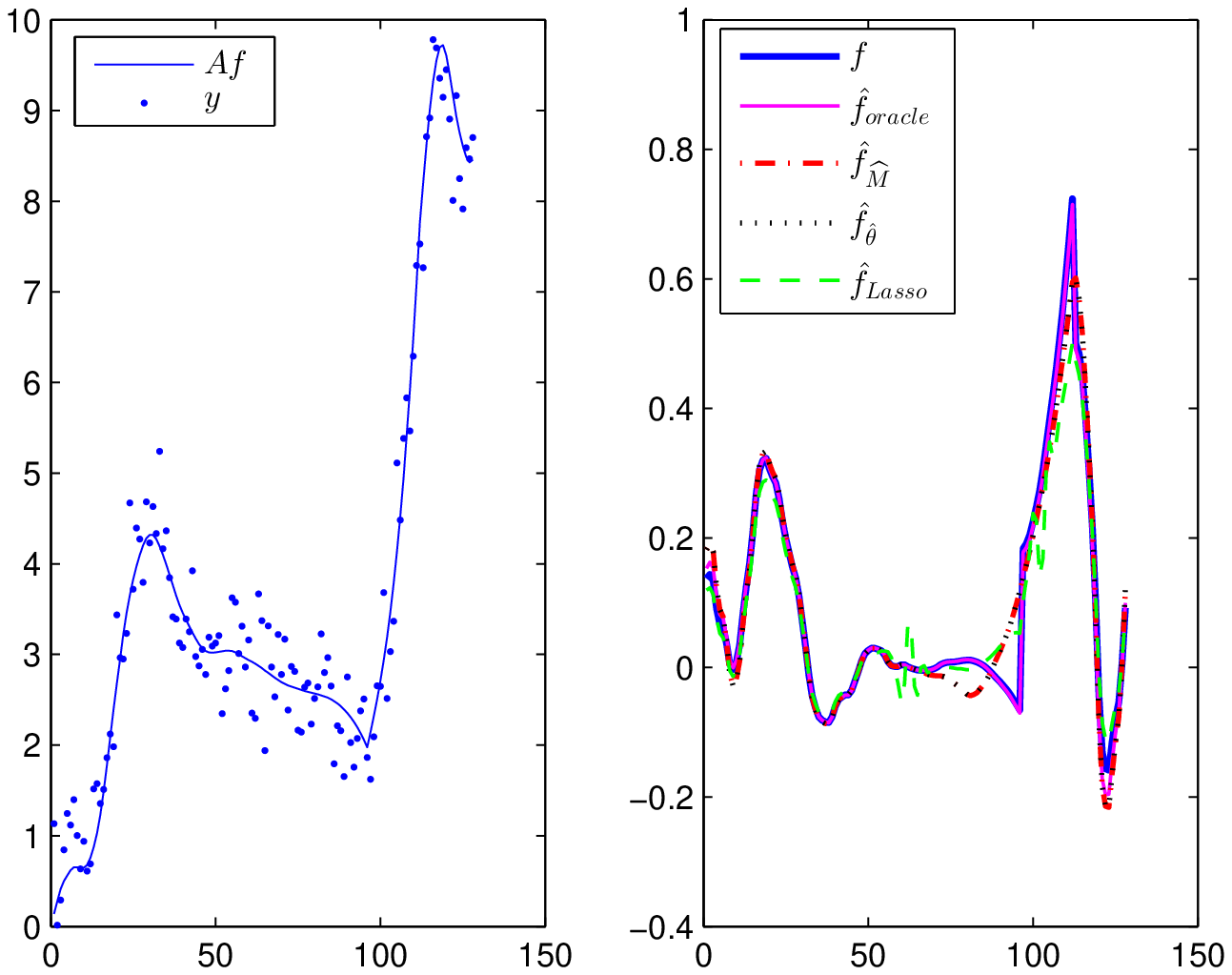}
\includegraphics[width=0.45\linewidth]{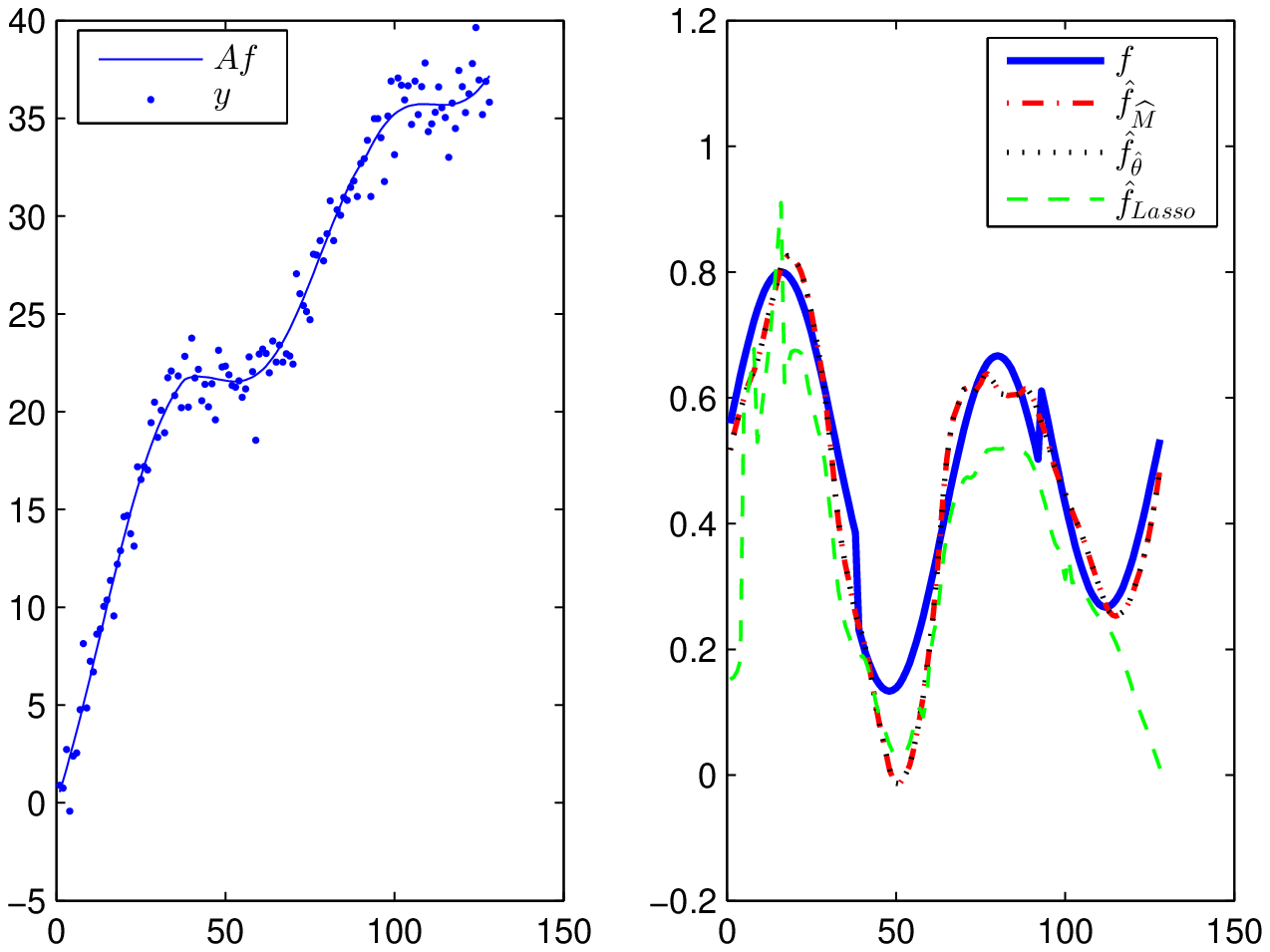}
\caption{The test functions, observed data and estimators for $SNR=5$. In each block of figures corresponding to four test functions, the left panel shows 
the data $y$ and the true $Af$; the right panel shows the true $f$ and
the four estimators. 
Top left: $f_1$; top right: $f_2$; bottom left: $f_3$; bottom right: $f_4$. }
\label{fig:plots}
\end{figure}

Figure \ref{fig:boxplot} presents the boxplots of $R(\hat{f})$ over 100 independent runs 
for  $\hat{f}_{oracle}$  (for $f_1, f_2$ and $f_3$), $\hat{f}_{\widehat M}$, $\hat{f}_{\hat \theta}$ and $\hat{f}_{Lasso}$.
Performances of all estimators  deteriorate as SNR decreases especially for the less sparse test functions. 
The  estimators $\hat{f}_{\widehat M}$ and $\hat{f}_{\hat \theta}$ always outperform $\hat{f}_{Lasso}$ 
and, as it is expected from our theoretical statements, $\hat{f}_{\hat \theta}$ yields somewhat better results than $\hat{f}_{\widehat M}$. 
We expect that the differences in precisions of $\hat{f}_{\widehat M}$ and $\hat{f}_{\hat \theta}$ would be more
significant  if we carried out aggregation over a larger portion of the model space than the last 50 visited models. 
Figure \ref{fig:plots} illustrates these conclusion by displaying examples of the estimators for $SNR=5$.
We should also mention that  estimator  $\hat{f}_{\widehat M}$  was usually  more sparse    than $\hat{f}_{Lasso}$.

\ignore{
The oracle estimator $\hat{f}_{oracle}$ and the model selection estimator
$\hat{f}_{\widehat M}$ were hardly distinguishable from  $\hat{f}_{\hat \theta}$ on the plot and were, therefore, omitted for the clarity of exposition.
}

\ignore{ 
\begin{figure} 
\centering
\includegraphics[width=0.45\linewidth]{f1_4estimator.eps}
\includegraphics[width=0.45\linewidth]{f2_4estimator.eps}\\
\includegraphics[width=0.45\linewidth]{f3_4estimator.eps}
\includegraphics[width=0.45\linewidth]{f4_4estimator.eps}
\caption{The test functions, observed data and estimators for $SNR=5$. In each block of figures corresponding to four test functions, the left panel shows 
the data $y$ and the true $Af$; the right panel shows the true $f$ and
the four estimators. 
Top left: $f_1$; top right: $f_2$; bottom left: $f_3$; bottom right: $f_4$. }
\label{fig:plots}
\end{figure}
}

\section*{Acknowledgments}

Marianna Pensky   was  partially supported by National Science Foundation
(NSF), grants   DMS-1407475 and DMS-1712977.



\section*{Appendix}
The proofs of the main results are based on the following auxiliary lemmas.

\begin{lemma} \label{lem:lemma1}
For any $x>0$,
$$
P\left( \sup_{M \in \cM} \left\{\|\Psi_M^T \varepsilon\|^2 -2 \sigma^2 \|\Psi_M\|^2_F
(\ln p + x) \right\} \leq 0 \right) \geq 1-{\sqrt \frac{2}{\pi}}~ e^{-x}
$$
\end{lemma}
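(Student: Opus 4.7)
The plan is to reduce the supremum over the exponentially large model class $\cM$ to a union bound over just the $p$ atoms of the dictionary. The key observation is that, because $\Psi_M = \Psi D_M$, both quantities appearing in the brackets decompose additively over the indices in $M$:
\begin{equation*}
\|\Psi_M^T \varepsilon\|^2 \;=\; \sum_{j \in M} \langle \psi_j, \varepsilon\rangle^2,
\qquad
\|\Psi_M\|_F^2 \;=\; \sum_{j \in M} \|\psi_j\|^2.
\end{equation*}
Consequently the event in the lemma is implied by the atomwise event
\begin{equation*}
\Omega \;:=\; \bigl\{\, \langle \psi_j, \varepsilon\rangle^2 \le 2\sigma^2 \|\psi_j\|^2 (\ln p + x) \text{ for every } j=1,\ldots,p\,\bigr\},
\end{equation*}
so it suffices to show $P(\Omega) \ge 1 - \sqrt{2/\pi}\, e^{-x}$.

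Next I would standardize: the scalars $Z_j := \langle \psi_j, \varepsilon\rangle/(\sigma \|\psi_j\|)$ are (marginally) $N(0,1)$, and no joint independence is needed for a union bound. Apply the sharp Mill's ratio inequality $P(|Z| > t) \le \sqrt{2/\pi}\, t^{-1} e^{-t^2/2}$ with threshold $t = \sqrt{2(\ln p + x)}$. For each atom this gives
\begin{equation*}
P\bigl(Z_j^2 > 2(\ln p + x)\bigr) \;\le\; \sqrt{\frac{2}{\pi}}\,\frac{e^{-(\ln p + x)}}{\sqrt{2(\ln p + x)}} \;=\; \sqrt{\frac{2}{\pi}}\,\frac{e^{-x}}{p\sqrt{2(\ln p + x)}}.
\end{equation*}
A single union bound over $j = 1,\ldots,p$ cancels the explicit factor $p$ in the denominator, and for any $p \ge 2$ (so that $\sqrt{2(\ln p + x)} \ge 1$) the remaining factor is absorbed, yielding $P(\Omega^c) \le \sqrt{2/\pi}\, e^{-x}$ as claimed.

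The main obstacle, and really the only nontrivial step, is spotting the atomic decomposition that turns a supremum over $\cM$ (whose cardinality may be as large as $\binom{p}{r/2}$) into a union bound over merely $p$ events. Without this observation, a naive union bound would pay a price of $\log|\cM| \asymp |M|\log p$; instead, the additive structure of both the quadratic form and the Frobenius-norm normalizer makes $\ln p$ the correct logarithmic factor and leaves the probability bound entirely free of the sparsity parameter $r$. Everything after the reduction is a textbook Gaussian tail computation with a carefully chosen threshold.
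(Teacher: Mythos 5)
Your proof is correct and follows essentially the same route as the paper: decompose $\|\Psi_M^T\varepsilon\|^2$ and $\|\Psi_M\|_F^2$ atomwise so that the supremum over $\cM$ reduces to the intersection of $p$ single-coordinate events, then apply Mill's ratio with threshold $\sqrt{2(\ln p + x)}$ and a union bound over the $p$ atoms. Your handling of the residual factor $1/\sqrt{2(\ln p+x)}$ (absorbed for $p\ge 2$) is in fact slightly more explicit than the paper's, which silently drops it.
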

\begin{proof}
For any model $M \in \cM$,
$\|\Psi_M^T \varepsilon\|^2=\sum_{j \in M} (\psi_j^T \varepsilon)^2$, where
$\psi_j^T \varepsilon \sim N(0, \sigma^2 \|\psi_j\|^2)$. By Mill's ratio
$$
P\left((\psi_j^T \varepsilon)^2 > 2 \sigma^2 \|\psi_j\|^2(\ln p + x)\right)
\leq {\sqrt \frac{2}{\pi}}~ p^{-1} e^{-x}
$$
for any $x>0$. Then,
\be \nonumber
\begin{split}
P&\left( \bigcap_{j \in M}\left\{(\psi_j^T \varepsilon)^2 - 2 \sigma^2 \|\psi_j\|^2(\ln p + x)
\leq  0\right\}\right) \geq 1-\sum_{j \in \cM} P\left((\psi_j^T \varepsilon)^2 - 2 \sigma^2 \ \|\psi_j\|^2(\ln p + x)>0\right) \\
& \geq 1- {\sqrt \frac{2}{\pi}}~ e^{-x}
\end{split}
\ee
and, therefore,
\be \nonumber
\begin{split}
P&\left( \sup_{M \in \cM} \left\{\|\Psi_M^T \varepsilon\|^2 -2 \sigma^2 \|\Psi_M\|^2_F
(\ln p + x)  \leq 0 \right\} \right) \geq
P\left(\bigcap_{M \in \cM} \bigcap_{j \in M} \left\{(\psi_j^T \varepsilon)^2 -
2 \sigma^2 \ \|\psi_j\|^2(\ln p + x) \leq 0\right\}\right) \\
& =
P\left(\bigcap_{j=1}^p\left\{(\psi_j^T \varepsilon)^2 - 2 \sigma^2 \ \|\psi_j\|^2(\ln p + x)
\leq 0\right\}\right) \geq 1-{\sqrt \frac{2}{\pi}}~ p^{-1} e^{-x}
\end{split}
\ee
\end{proof}


\begin{lemma} \label{lem:lemma2}
For any $M_1, M_2 \in \cM$, one has
\beqn
\Tr(\Psi_{M_1 \cup M_2}^T \Psi_{M_1 \cup M_2}) & \leq & \Tr(\Psi_{M_1}^T \Psi_{M_1}) + \Tr(\Psi_{M_2}^T \Psi_{M_2})
\label{lem2:ineq1}\\
\lambda_{\max} (\Psi_{M_1 \cup M_2}^T \Psi_{M_1 \cup M_2}) & \leq &
2 \lkv \lambda_{\max} (\Psi_{M_1}^T \Psi_{M_1}) + \lambda_{\max} (\Psi_{M_2}^T\Psi_{M_2}) \rkv
\label{lem2:ineq2}
\eeqn
\end{lemma}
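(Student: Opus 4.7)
My plan is to handle the two inequalities separately, using only elementary linear algebra together with the explicit structure $\Psi_M = \Psi D_M$ where $D_M$ is the diagonal selector for columns indexed by $M$.

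For \eqref{lem2:ineq1}, the key observation is that the columns of $\Psi_M$ are simply $\psi_j$ for $j\in M$ and zero elsewhere, so
\[
\Tr(\Psi_M^T\Psi_M)=\|\Psi_M\|_F^2=\sum_{j\in M}\|\psi_j\|^2.
\]
Applied to $M_1\cup M_2$, this reduces the first inequality to the set-theoretic bound
\[
\sum_{j\in M_1\cup M_2}\|\psi_j\|^2 \;=\;\sum_{j\in M_1}\|\psi_j\|^2+\sum_{j\in M_2}\|\psi_j\|^2-\sum_{j\in M_1\cap M_2}\|\psi_j\|^2 \;\leq\;\sum_{j\in M_1}\|\psi_j\|^2+\sum_{j\in M_2}\|\psi_j\|^2,
\]
using non-negativity of the overlap term. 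That settles \eqref{lem2:ineq1} with no further work.

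For \eqref{lem2:ineq2}, I would use the variational characterization
\[
\lambda_{\max}(\Psi_M^T\Psi_M)\;=\;\sup_{\|x\|=1}\|\Psi_M x\|^2\;=\;\sup_{\|x\|=1}\|\Psi D_M x\|^2.
\]
Fix any $x\in\mathbb{R}^p$ with $\|x\|=1$ and split $D_{M_1\cup M_2}x=z_1+z_2$, where $z_1$ has support in $M_1$ (namely the restriction of $x$ to $M_1$, padded with zeros) and $z_2$ has support in $M_2\setminus M_1\subseteq M_2$. Because the supports of $z_1$ and $z_2$ are disjoint, $\|z_1\|^2+\|z_2\|^2=\|D_{M_1\cup M_2}x\|^2\leq\|x\|^2=1$. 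Moreover, since $z_i$ is supported in $M_i$, one has $\Psi z_i=\Psi D_{M_i} z_i=\Psi_{M_i}z_i$ for $i=1,2$. Applying the inequality $(a+b)^2\leq 2a^2+2b^2$ after the triangle inequality and then the eigenvalue bound on each $\|\Psi_{M_i}z_i\|^2$ gives
\[
\|\Psi_{M_1\cup M_2}x\|^2\;\leq\;2\|\Psi_{M_1}z_1\|^2+2\|\Psi_{M_2}z_2\|^2\;\leq\;2\bigl[\lambda_{\max}(\Psi_{M_1}^T\Psi_{M_1})\|z_1\|^2+\lambda_{\max}(\Psi_{M_2}^T\Psi_{M_2})\|z_2\|^2\bigr].
\]
Bounding each $\|z_i\|^2$ by $\|z_1\|^2+\|z_2\|^2\leq 1$ and summing yields the stated bound.

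There is no real obstacle here; the only point requiring a little care is the split $D_{M_1\cup M_2}x = z_1+z_2$ with disjoint supports contained respectively in $M_1$ and $M_2$, so that each piece can legitimately be processed by the corresponding $\Psi_{M_i}$. The factor of $2$ in \eqref{lem2:ineq2} is exactly the price paid for applying $(a+b)^2\leq 2a^2+2b^2$; without the disjointness, no such clean split would be available and one would need a more delicate argument.
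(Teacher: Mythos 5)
Your proof is correct and follows essentially the same route as the paper: the trace bound via $\Tr(\Psi_M^T\Psi_M)=\sum_{j\in M}\|\psi_j\|^2$ with the nonnegative overlap dropped, and the eigenvalue bound via the variational characterization after splitting the vector into disjointly supported pieces on $M_1$ and $M_2\setminus M_1$ and applying $\|a+b\|^2\le 2\|a\|^2+2\|b\|^2$. Your write-up is in fact slightly more explicit than the paper's about the support decomposition and the normalization $\|z_1\|^2+\|z_2\|^2\le 1$, but the argument is the same.
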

\begin{proof}
Inequality \eqref{lem2:ineq1} follows from
$$
\Tr(\Psi_{M_1 \cup M_2}^T \Psi_{M_1 \cup M_2})  =  \sum_{j \in M_1 \cup M_2} \|\psi_j\|^2
\leq  \Tr(\Psi_{M_1}^T \Psi_{M_1}) + \Tr(\Psi_{M_2}^T \Psi_{M_2})
$$
In order to prove   inequality \eqref{lem2:ineq2}, observe that
\beqns
\lambda_{\max} (\Psi_{M_1 \cup M_2}^T \Psi_{M_1 \cup M_2})
& = & \max_{x \in R^{|M_1 + M_2|}}
\frac{\| \Psi_{M_1} x_{M_1} + \Psi_{M_2 \setminus M_1} x_{M_2 \setminus M_1} \|^2}{\|x\|^2}\\
& \leq & 2\, \max_{x \in R^{|M_1 + M_2|}} \lkv \frac{\| \Psi_{M_1} x_{M_1} \|^2}{\|x\|^2}
+ \frac{\| \Psi_{M_2} x_{M_2} \|^2}{\|x\|^2} \rkv
\eeqns
which implies \eqref{lem2:ineq2}.
\end{proof}


\begin{lemma} \label{lem:lemma3}
If $\te \in \Theta_{\cM}$ where $\Theta_{\cM}$ is defined in \eqref{eq:cone}, then, for    any function $\tilde{f}$
\be \label{eq:lem3}
\sum_{M \in \cM} \hatte_M \|\tilde{f} - \hatf_M\|^2 =
\|\tilde{f} - \hatf_{\hatte}\|^2 + \sum_{M \in \cM} \hatte_M \|\hatf_{\hatte} - \hatf_M\|^2.
\ee
\end{lemma}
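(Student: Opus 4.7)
The plan is to recognize this as the standard ``bias--variance'' style decomposition for a convex combination: for any weights $w_i \ge 0$ summing to one and any vectors $y_i$ with convex mean $\bar{y} = \sum_i w_i y_i$, one has $\sum_i w_i \|x - y_i\|^2 = \|x - \bar{y}\|^2 + \sum_i w_i \|\bar{y} - y_i\|^2$. Lemma \ref{lem:lemma3} is the specialization to $x = \tilde{f}$, $y_M = \hat{f}_M$, and weights $w_M = \hat{\theta}_M$, with $\bar{y} = \hat{f}_{\hat{\theta}}$ by definition \eqref{eq:fhattheta}.

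First I would insert $\pm \hat{f}_{\hat{\theta}}$ inside each $\|\tilde{f} - \hat{f}_M\|^2$ and expand the square to get three groups of terms:
\begin{equation*}
\sum_{M \in \cM} \hat{\theta}_M \|\tilde{f} - \hat{f}_M\|^2
= \sum_{M \in \cM} \hat{\theta}_M \|\tilde{f} - \hat{f}_{\hat{\theta}}\|^2
+ 2 \sum_{M \in \cM} \hat{\theta}_M \langle \tilde{f} - \hat{f}_{\hat{\theta}},\, \hat{f}_{\hat{\theta}} - \hat{f}_M \rangle
+ \sum_{M \in \cM} \hat{\theta}_M \|\hat{f}_{\hat{\theta}} - \hat{f}_M\|^2.
\end{equation*}
Next, since $\sum_{M} \hat{\theta}_M = 1$ (as $\hat{\theta} \in \Theta_{\cM}$), the first term collapses to $\|\tilde{f} - \hat{f}_{\hat{\theta}}\|^2$. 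For the cross term, I would pull the vector $\tilde{f} - \hat{f}_{\hat{\theta}}$ outside the sum by linearity of the inner product, giving $2\langle \tilde{f} - \hat{f}_{\hat{\theta}},\, \hat{f}_{\hat{\theta}} - \sum_{M} \hat{\theta}_M \hat{f}_M \rangle$, which vanishes because $\sum_{M} \hat{\theta}_M \hat{f}_M = \hat{f}_{\hat{\theta}}$ by \eqref{eq:fhattheta}. Combining these gives the desired identity \eqref{eq:lem3}.

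There is no real obstacle here; the result is a purely algebraic identity that relies only on (i) the weights forming a probability vector and (ii) $\hat{f}_{\hat{\theta}}$ being the $\hat{\theta}$-weighted average of the $\hat{f}_M$. Nothing depends on the structure of the inverse problem, on the specific form of the projection estimators $\hat{f}_M$, or on the choice of $\tilde{f}$. The lemma will later play the role of a ``Pythagorean'' tool allowing one, in the proof of Theorem \ref{th:oracle_aggregation}, to rewrite expressions involving $\|z - \hat{f}_{\hat\theta}\|^2$ and $\sum_M \hat{\theta}_M \|z - \hat{f}_M\|^2$ in terms of each other up to the nonnegative dispersion term $\sum_M \hat{\theta}_M \|\hat{f}_{\hat{\theta}} - \hat{f}_M\|^2$.
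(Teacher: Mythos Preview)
Your proof is correct and follows essentially the same route as the paper: expand $\|\tilde{f} - \hatf_M\|^2$ around $\hatf_{\hatte}$ into three terms, use $\sum_M \hatte_M = 1$ to collapse the first term, and use $\sum_M \hatte_M \hatf_M = \hatf_{\hatte}$ to kill the cross term. The paper's proof is identical in substance, just slightly more condensed.
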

\begin{proof}
Note that for any $\tilde{f}$ one has
\beqns
\sum_{M \in \cM} \hatte_M \|\tilde{f} - \hatf_M\|^2  & = & \sum_{M \in \cM} \hatte_M  \|\tilde{f} - \hatf_{\hatte}\|^2 +
\sum_{M \in \cM} \hatte_M  \|\hatf_M - \hatf_{\hatte}\|^2 \\
& + & 2 \langle \tilde{f} - \hatf_{\hatte}, \sum_{M \in \cM} \hatte_M (\hatf_{\hatte} - \hatf_M \rangle.
\eeqns
Now validity of \eqref{eq:lem3} follows from the fact that $\te \in \Theta_{\cM}$, so that the scalar product term in the last identity
is equal to zero.
\end{proof}


\begin{lemma} \label{lem:lemma4}
Let $\xi$ be defined in formula \eqref{eq:z_def}  and $M_0 \subseteq \{ 1, \cdots, p\}$ be an arbitrary fixed model.  
Denote $U_M =  4 \nu^{-2}_r\, \sigma^2 (\delta + 1)  \ \|\Psi_M\|^2_F\ \ln p$ and
\be \label{eq:Delta0}
\Delta_0 = 2 \sigma \langle \xi, \hatf_{\hatte} - \hatf_{M_0} \rangle + \frac{\sigma^2}{\nu_r^2}\, \|\Psi_{M_0}^T \eps\|^2
- \frac{3}{2}\, U_{M_0} - \sum_{M \in \cM} \hatte_M U_M - \frac{1}{2}\, \sum_{M \in \cM} \hatte_M \|\hatf_M -\hatf_{M_0}\|^2.
\ee
Then, with probability at least $1 - \sqrt{2/\pi}\, p^{-\delta}$, one has $\Delta_0 \leq 0$. Moreover,
if the set of models is restricted to $\cM_\gamma=\{M \in \cM:\|\Psi_M\|_F^2 \leq \gamma^2 n\}$, then
\be \label{eq:lem4}
E (\Delta_0) \leq U_{M_0} \lkv \frac{3}{4(\delta+1) \ln p}- \frac{3}{2} \rkv + \frac{4 \sigma^2 \gamma^2}{\nu_r^2} \, n p^{-\delta}.
\ee
\end{lemma}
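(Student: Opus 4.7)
The plan is to bound $\Delta_0$ directly in terms of the noise quadratic forms $\|\Psi_M^T\varepsilon\|^2$, so that on the good event of Lemma~1 all the residual terms cancel. The crux is to control the data-dependent cross term $2\sigma\langle\xi,\hat{f}_{\hat\theta}-\hat{f}_{M_0}\rangle$ so that the $-\tfrac12\sum_M\hat\theta_M\|\hat{f}_M-\hat{f}_{M_0}\|^2$ and $-\sum_M\hat\theta_M U_M$ pieces built into the definition of $\Delta_0$ absorb the Young-inequality residuals exactly.

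Write $2\sigma\langle\xi,\hat{f}_{\hat\theta}-\hat{f}_{M_0}\rangle=\sum_M \hat\theta_M\cdot 2\sigma\langle\xi,\hat{f}_M-\hat{f}_{M_0}\rangle$. Each difference $\hat{f}_M-\hat{f}_{M_0}$ lies in the column span of $\Phi_{M\cup M_0}$, so it equals $\Phi_{M\cup M_0}\beta_{M,M_0}$ for some $\beta_{M,M_0}$. Since $|M\cup M_0|\le|M|+|M_0|\le r$, the minimal sparse eigenvalue gives $\|\beta_{M,M_0}\|\le\nu_r^{-1}\|\hat{f}_M-\hat{f}_{M_0}\|$. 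Using the identity $\Phi_M^T\xi=\Psi_M^T\varepsilon$ (immediate from $\Psi=A(A^TA)^{-1}\Phi$), Cauchy--Schwarz yields $|\langle\xi,\hat{f}_M-\hat{f}_{M_0}\rangle|\le\nu_r^{-1}\|\Psi_{M\cup M_0}^T\varepsilon\|\,\|\hat{f}_M-\hat{f}_{M_0}\|$. Then Young's inequality $2ab\le 2a^2+\tfrac12 b^2$, the coordinatewise bound $\|\Psi_{M\cup M_0}^T\varepsilon\|^2\le\|\Psi_M^T\varepsilon\|^2+\|\Psi_{M_0}^T\varepsilon\|^2$ (the same argument used to prove \eqref{lem2:ineq1}), and averaging give
$$2\sigma\langle\xi,\hat{f}_{\hat\theta}-\hat{f}_{M_0}\rangle\le \frac{2\sigma^2}{\nu_r^2}\sum_M \hat\theta_M\|\Psi_M^T\varepsilon\|^2+\frac{2\sigma^2}{\nu_r^2}\|\Psi_{M_0}^T\varepsilon\|^2+\frac12\sum_M \hat\theta_M\|\hat{f}_M-\hat{f}_{M_0}\|^2.$$

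Substituting into the definition of $\Delta_0$ cancels the quadratic $\|\hat{f}_M-\hat{f}_{M_0}\|^2$ terms exactly and leaves
$$\Delta_0\le \frac{2\sigma^2}{\nu_r^2}\sum_M \hat\theta_M\|\Psi_M^T\varepsilon\|^2+\frac{3\sigma^2}{\nu_r^2}\|\Psi_{M_0}^T\varepsilon\|^2-\frac32 U_{M_0}-\sum_M \hat\theta_M U_M.$$
On the event $\Omega_0$ from Lemma~1 with $x=\delta\ln p$, the definition $U_M=4\sigma^2(\delta+1)\nu_r^{-2}\|\Psi_M\|_F^2\ln p$ is precisely calibrated to give $\tfrac{2\sigma^2}{\nu_r^2}\|\Psi_M^T\varepsilon\|^2\le U_M$ for every $M$ and $\tfrac{3\sigma^2}{\nu_r^2}\|\Psi_{M_0}^T\varepsilon\|^2\le\tfrac32 U_{M_0}$, whence the right-hand side is $\le 0$. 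Since $P(\Omega_0^c)\le\sqrt{2/\pi}\,p^{-\delta}$ by Lemma~1, this proves the high-probability claim.

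For the expectation bound, take expectations of the displayed upper bound on $\Delta_0$. The unconditional identity $E\|\Psi_{M_0}^T\varepsilon\|^2=\|\Psi_{M_0}\|_F^2$ yields $E\bigl[\tfrac{3\sigma^2}{\nu_r^2}\|\Psi_{M_0}^T\varepsilon\|^2\bigr]=\tfrac{3\sigma^2\|\Psi_{M_0}\|_F^2}{\nu_r^2}=U_{M_0}\cdot\tfrac{3}{4(\delta+1)\ln p}$, which combined with $-\tfrac32 U_{M_0}$ produces the first term of \eqref{eq:lem4}. The residual $E\bigl[\tfrac{2\sigma^2}{\nu_r^2}\sum_M\hat\theta_M\|\Psi_M^T\varepsilon\|^2-\sum_M\hat\theta_M U_M\bigr]$ is non-positive on $\Omega_0$ by the same Lemma~1 bound, so only the $\Omega_0^c$ contribution survives. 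There, the $\cM_\gamma$ restriction $\|\Psi_M\|_F^2\le\gamma^2 n$ gives the deterministic bound $\|\Psi_M^T\varepsilon\|^2\le\|\Psi_M\|_F^2\,\|\varepsilon\|^2\le\gamma^2 n\|\varepsilon\|^2$; coupling this with $P(\Omega_0^c)\le\sqrt{2/\pi}\,p^{-\delta}$ and a Cauchy--Schwarz-type estimate balancing $E[\|\varepsilon\|^2\mathbf{1}_{\Omega_0^c}]$ against the tail probability produces the correction $\tfrac{4\sigma^2\gamma^2}{\nu_r^2}np^{-\delta}$. The main obstacle is this last step: tracking the precise constant $4$ and the rate $np^{-\delta}$ requires careful balancing of the Gaussian moment of $\|\varepsilon\|^2$ against the exponentially small tail of $\Omega_0^c$, with the $\cM_\gamma$ restriction providing the uniform deterministic control of $\|\Psi_M\|_F^2$ without which the bad-event term would be uncontrolled.
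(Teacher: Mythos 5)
Your treatment of the high-probability claim and of $E(\Delta_{01})$ is essentially the paper's own argument: decompose $2\sigma\langle \xi,\hat f_{\hat\theta}-\hat f_{M_0}\rangle$ over the weights $\hat\theta_M$, bound each term through the $r$-sparse eigenvalue and Young's inequality so that the $\tfrac12\|\hat f_M-\hat f_{M_0}\|^2$ residual is absorbed exactly, use subadditivity of $\|\Psi_{M\cup M_0}^T\varepsilon\|^2$, and invoke Lemma~1 with $x=\delta\ln p$; then use $E\|\Psi_{M_0}^T\varepsilon\|^2=\|\Psi_{M_0}\|_F^2$ for the first term of \eqref{eq:lem4}. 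That part is fine.

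The genuine gap is in your handling of the remaining term, i.e. $E(\Delta_{02})$ with $\Delta_{02}=\tfrac{2\sigma^2}{\nu_r^2}\sum_M\hat\theta_M\bigl[\|\Psi_M^T\varepsilon\|^2-2(\delta+1)\ln p\,\|\Psi_M\|_F^2\bigr]$. Your plan is: $\Delta_{02}\le 0$ on the good event $\Omega_0$, and on $\Omega_0^c$ use the deterministic bound $\|\Psi_M^T\varepsilon\|^2\le\gamma^2 n\|\varepsilon\|^2$ together with a Cauchy--Schwarz balance of $E[\|\varepsilon\|^2\mathbf{1}_{\Omega_0^c}]$ against $P(\Omega_0^c)$. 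This cannot produce the claimed correction $\tfrac{4\sigma^2\gamma^2}{\nu_r^2}np^{-\delta}$: Cauchy--Schwarz gives $E[\|\varepsilon\|^2\mathbf{1}_{\Omega_0^c}]\le\sqrt{E\|\varepsilon\|^4}\sqrt{P(\Omega_0^c)}\lesssim n\,p^{-\delta/2}$, so combined with the factor $\gamma^2 n$ you get at best order $n^2p^{-\delta/2}$ — an extra factor of $n$ and only half the exponent in $p$; even a sharper splitting of the bad event still leaves an extra $n$, because the crude bound $\gamma^2 n\|\varepsilon\|^2\approx\gamma^2 n^2$ overshoots the relevant scale $\gamma^2 n\ln p$ by a factor $n/\ln p$. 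You flag this step as "the main obstacle," but the fix is not a more careful moment/tail balance on a single bad event: the paper instead integrates the tail of the positive part, $E(\Delta_{02})\le E(\Delta_{02})_+=\int_0^\infty P(\Delta_{02}>t)\,dt$, substitutes $t=4\sigma^2\gamma^2\nu_r^{-2}n\ln p\,u$, and applies Lemma~1 at the moving level $x=(\delta+u)\ln p$ (here $\|\Psi_M\|_F^2\le\gamma^2 n$ is used only to convert the level $t$ into a uniform shift of $x$ over all $M\in\cM_\gamma$); integrating $p^{-u}$ produces a $1/\ln p$ that cancels the $\ln p$ and yields exactly $4\sigma^2\gamma^2\nu_r^{-2}np^{-\delta}$. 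Without this layered (or peeling-type) use of Lemma~1 across all thresholds, your expectation bound \eqref{eq:lem4} is not established.
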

\begin{proof}
Note that due to   $\hatte \in \Theta_{\cM}$ , one has
\beqns
 2 \sigma \langle \xi, \hatf_{\hatte} - \hatf_{M_0} \rangle  & = &
2 \sigma \, \sum_{M \in \cM} \hatte_M \langle \xi, \hatf_{M} - \hatf_{M_0} \rangle.
\eeqns
Recall that  $\hatf_M = H_M z$, hence, applying the  inequality $2\sqrt{u v} \leq au +\frac{1}{a}v$
for any positive $a, u, v>0$ and $M \in \cM$, one obtains
\beqns
2 \sigma \,  |\langle \xi, \hatf_{\hatte} - \hatf_{M_0} \rangle| & = &
2 \sigma \, |\xi^T H_{M \cup M_0}(H_M - H_{M_0})z|\\
& \leq & 2 \sigma^2 \nu_r^{-2} \|\Psi_{M \cup M_0}^T \eps\|^2 + 0.5\, \|\hatf_M - \hatf_{M_0}\|^2.
\eeqns
Combining the last two inequalities, taking into account that
$\|\Psi_{M \cup M_0}^T \eps\|^2 \leq \|\Psi_{M}^T \eps\|^2 + \|\Psi_{M_0}^T \eps\|^2$
 and plugging them into \eqref{eq:Delta0}, obtain
\be \label{eq:lem4.1}
\Delta_0 \leq \frac{2 \sigma^2}{\nu_r^2} \, \sum_{M \in \cM} \hatte_M \|\Psi_{M}^T \eps\|^2 +
\frac{3 \sigma^2}{\nu_r^2} \, \|\Psi_{M_0}^T \eps\|^2 - \frac{3}{2} U_{M_0} - \sum_{M \in \cM} \hatte_M U_M.
\ee
Applying Lemma~\ref{lem:lemma1} with $x =  \delta\,\ln p$, expressing $U_M$ via $\|\Psi_M\|^2_F$
and taking into account that $\hatte \in \Theta_{\cM}$, we derive that $\Delta_0  \leq 0$
on the set $\Omega$ with $P(\Omega) \geq 1- \sqrt{2/\pi}~ p^{-\delta}$ on which
$$
\sup_{M \in \cM} \left\{\|\Psi_M^T \varepsilon\|^2 -2 \sigma^2 \|\Psi_M\|^2_F
 \ln p (1 + \delta) \right\} \leq 0.
$$
In order to derive inequality \eqref{eq:lem4}, plugging in the expression for $U_M$ and $U_{M_0}$ into \eqref{eq:lem4.1},
derive that $\Delta_0 \leq \Delta_{01} + \Delta_{02}$ where
\beqns
\Delta_{01} & = & \frac{3 \sigma^2}{\nu_r^2} \lkv \| \Psi_{M_0}^T \eps\|^2 - 2(\delta+1) \ln p \| \Psi_{M_0}\|^2_F \rkv \\
\Delta_{02} & = &   \frac{2 \sigma^2}{\nu_r^2}\, \sum_{M \in \cM} \hatte_M  \lkv \| \Psi_{M}^T \eps\|^2 - 2(\delta+1) \ln p \| \Psi_{M}\|^2_F \rkv
\eeqns
By direct calculations, one can easily show that
\be \label{eq:lem4.2}
E (\Delta_{01}) = U_{M_0} \lkr \frac{3}{4 (\delta+1) \ln p} - \frac{3}{2} \rkr.
\ee
In order to derive an upper bound for $E (\Delta_{02})$, recall that
\beqns
E (\Delta_{02}) & \leq & E (\Delta_{02})_{+} = \int_0^\infty P(\Delta_{02}>t) dt
\eeqns
Plugging in the expression for $\Delta_{02}$, taking into account that $\|\Psi_M\|_F^2 \leq \gamma^2 n$,
replacing $t$ by $4 \sigma^2 \gamma^2 \nu_r^{-2} n\,  \ln p\ u$,
making a change of variables for integration and applying Lemma~\ref{lem:lemma1},
similarly to the proof of Theorem~\ref{th:oracle_mod_select},  we arrive at
\be \label{eq:lem4.3}
E (\Delta_{02}) \leq  4 \sigma^2 \gamma^2 \nu_r^{-2} n p^{-\delta}
\ee
Combination of \eqref{eq:lem4.2} and \eqref{eq:lem4.3} complete the proof.
\end{proof}


\subsection*{Proof of Theorem \ref{th:oracle_mod_select}}
%
Let $z$ be defined in \eqref{eq:z_def}.
Since $\widehat{M}$ is the minimizer in (\ref{eq:mhat}), for any given model
$M \in \cM$
$$
\|z-\hat{f}_{\widehat M}\|^2 +\Pen(\widehat{M}) \leq \|z-\hat{f}_M\|^2 +\Pen(M)
$$
and, by a straightforward calculus, one can easily verify that
\be \label{eq:a1}
\|\hat{f}_{\widehat M}-f\|^2 \leq
\|\hat{f}_M-f\|^2+2\langle \xi, \hat{f}_{\widehat M}-\hat{f}_M\rangle+\Pen(M)-\Pen(\widehat{M})
\ee
Denote $\widetilde{M}=\widehat{M} \cup M$ and recall that, by the definition of $\cM$,
$|\widetilde{M}| \leq r$.
Thus, by the Cauchy-Schwarz inequality and the definition of $\nu^2_r$
\be
\begin{split} \label{eq:a2}
2\langle \xi, \hat{f}_{\widehat M}-\hat{f}_M\rangle & =~2\langle (A^T A)^{-1}A^T\varepsilon,
\Phi_{\widetilde M}(\hat{\theta}_{\widehat M}-\hat{\theta}_M)\rangle~=~
2\langle \Psi_{\widetilde M}^T \varepsilon, \hat{\theta}_{\widehat M}-\hat{\theta}_M\rangle \\
& ~\leq 2 \|\Psi_{\widetilde M}^T \varepsilon\|~ \|\hat{\theta}_{\widehat M}-\hat{\theta}_M\|
~\leq 2~ \|\Psi_{\widetilde M}^T \varepsilon\|~ \nu_r^{-1}\, \|\hat{f}_{\widehat M}-\hat{f}_M\|
\end{split}
\ee
Using inequalities $2\sqrt{u v} \leq au +\frac{1}{a}v$ for any positive $a, u, v>0$
and $\|u-v\|^2 \leq 2(\|u\|^2+\|v\|^2)$, from (\ref{eq:a2}) obtain
$$
2\langle \xi, \hat{f}_{\widehat M}-\hat{f}_M\rangle ~\leq ~
a\|\hat{f}_{\widehat M}-f\|^2 + a \|\hat{f}_M-f\|^2+\frac{2}{a\nu_r} \|\Psi_{\widetilde M}^T \varepsilon\|^2
$$
Thus, from (\ref{eq:a1}) it follows that for any $0 < a < 1$
\be \label{eq:a3}
(1-a)\|\hat{f}_{\widehat M}-f\|^2 \leq (1+a)\|\hat{f}_M-f\|^2+\Pen(M)+
\frac{2}{a\nu_r^2} \|\Psi_{\widetilde M}^T \varepsilon\|^2 - \Pen({\widehat M})
\ee
Note that
$ 
\|\hat{f}_M-f\|^2=\|\hat{f}_M-f_M\|^2+\|f_M-f\|^2=
\varepsilon^T \Psi_M (\Phi_M^T \Phi_M)^{-1}\Psi_M^T\varepsilon+\|f_M-f\|^2
\leq \frac{1}{\nu_r^2} \|\Psi_M^T\varepsilon\|^2+\|f_M-f\|^2,
$ 
while, by Lemma~\ref{lem:lemma2},
$ 
\|\Psi_{\widetilde M}^T \varepsilon\|^2 ~\leq~ \|\Psi_{\widehat M}^T \varepsilon\|^2
+\|\Psi_M^T \varepsilon\|^2.
$ 
Therefore, (\ref{eq:a3}) implies
\be \label{eq:a4}
(1-a)\|\hat{f}_{\widehat M}-f\|^2 \leq (1+a)\|f_M-f\|^2+Pen(M)+
\frac{a^2+a+2}{a\nu^2_r} \|\Psi_M^T\varepsilon\|^2+
\frac{2}{a\nu^2_r}\|\Psi_{\widehat M}^T\varepsilon\|^2-Pen({\widehat M})
\ee

Applying Lemma \ref{lem:lemma1} with $x=\delta \ln p$ for any $\delta>0$,  w.p.
at least $1-{\sqrt \frac{2}{\pi}}~p^{-\delta}$ one derives
$\|\Psi_M^T\varepsilon\|^2 \leq 2\sigma^2\|\Psi_M\|^2_F
(\delta+1)\ln p$ and $\|\Psi_{\widehat M}^T\varepsilon\|^2 \leq
2\sigma^2\|\Psi_{\widehat M}\|^2_F (\delta+1)\ln p$.
Hence, for the penalty $\Pen(M)$ in (\ref{eq:penalty}),
w.p. at least $1-{\sqrt{2/\pi}~p^{-\delta}}$ one has
$\frac{2}{a\nu^2_r}\|\Psi_{\widehat M}^T\varepsilon\|^2 - \Pen({\widehat M}) \leq 0$.
Thus, after a straightforward calculus,
$$
(1-a)\|\hat{f}_{\widehat M}-f\|^2 \leq (1+a)\|f_M-f\|^2+
\frac{(a^2+a+4)}{2}~ Pen(M)
$$
for any model $M \in \cM$ that proves (\ref{eq:oraclehp}).

To prove the oracle inequality in expectation (\ref{eq:oraclee}), note that
it follows from inequality \eqref{eq:a3} and Lemma~\ref{lem:lemma2} that
\be \label{eq:exp1}
(1-a) \|\hat{f}_{\widehat M}-f\|^2 \leq (1+a)\|{f}_M-f\|^2+  (1+a)\|\Psi_{M}^T \varepsilon\|^2 + \Pen(M)+
2/(a\nu_r^2)\, \lkr \|\Psi_{M}^T \varepsilon\|^2 +   \Delta \rkr
\ee
where $\Delta = \|\Psi_{\widehat{M}}^T \varepsilon\|^2- 2(\delta+1) \ln p\, \|\Psi_{\widehat{M}}^T\|^2_F$.
Using Lemma~\ref{lem:lemma1}, obtain
\begin{align*}
E \Delta & \leq E (\Delta)_+ = \int_0^\infty P(\Delta >t)dt = 2 \gamma^2 n \, \ln p\ \int_0^\infty P(\Delta > 2 \gamma^2 n \, \ln p\ u) du\\
         & \leq  2 \gamma^2 n \, p^{-\delta} \ \int_0^\infty p^{-u} \ln p\ du =  2 \gamma^2 n \, p^{-\delta}
\end{align*}
Taking expectation in \eqref{eq:exp1} and combining it with the last inequality, we obtain (\ref{eq:oraclee}).
\newline
$\Box$


\subsection*{Proof of Theorem \ref{th:oracle_aggregation}}
Denote
\bes 
\hatS (\theta) = \frac{1}{2} \sum_{M \in \cM} \te_M \|z  - \hat{f}_M\|^2 + \frac{1}{2} \|z - \hat{f}_{\te}\|^2, \quad
S(\te) = \frac{1}{2} \sum_{M \in \cM} \te_M \|f  - \hat{f}_M\|^2 + \frac{1}{2} \|f - \hat{f}_{\te}\|^2
\ees
Note that, by definition,
\be \label{eq:S_rel_main}
S(\hatte) -  S(\te) = \frac{1}{2} \sum_{M \in \cM} (\hatte_M - \te_M) \|f  - \hat{f}_M\|^2 +
\frac{1}{2} \lkv  \|f - \hatf_{\hatte}\|^2 - \|f - \hatf_{\te}\|^2 \rkv,
\ee
and also
\be \label{eq:S_rel}
\hatS (\theta) = S(\te) + \|z\|^2 - \|f\|^2 - 2 \sigma\, \langle \xi, \hat{f}_{\te} \rangle
\ee
where $\xi$ is defined in \eqref{eq:z_def}.
By definition of $\hatte$, one has
\be \label{eq:main_ineq}
\hatS(\hatte) + \sum_{M \in \cM} \hatte_M U_M \leq \hatS (\te) + \sum_{M \in \cM} \te_M U_M
\ee
where $U_M =  4 \nu^{-2}_r\, \sigma^2 (\delta + 1)  \ \|\Psi_M\|^2_F\ \ln p$ is defined in Lemma~\ref{lem:lemma4}.
Then, combination of \eqref{eq:S_rel} and \eqref{eq:main_ineq} yields
\be \label{eq:for10} 
S(\hatte) -  S(\te) \leq    \sum_{M \in \cM} (\te_M - \hatte_M) U_M +
2 \sigma\, \langle \xi, \hat{f}_{\hatte}- \hat{f}_{\te} \rangle
\ee
Fix $\beta \in (0,1)$, $M_0 \in \Theta_{\cM}$ and let
 $e_M$ be the $M$-th vector of the canonical basis.
Consider
\be   \label{eq:hatf_te}
\te   = (1-\beta)\hatte + \beta e_{M_0}, \quad
\hatf_{\te} = (1-\beta) \hat{f}_{\hatte} + \beta\hatf_{M_0}.
\ee
and, due to
$\|f - \hatf_{M_0}\|^2 - \|f - \hat{f}_{\hatte}\|^2 - \|\hat{f}_{\hatte} - \hatf_{M_0}\|^2
= 2 \langle f - \hatf_{\hatte}, \hatf_{\hatte} - \hatf_{M_0}\rangle,
$
one can write
$$ 
\|(1-\beta) (f-\hat{f}_{\hatte}) + \beta (f - \hatf_{M_0}) \|^2
= (1-\beta) \|f-\hat{f}_{\hatte}\|^2 - \beta (1-\beta) \|\hatf_{\hatte} - \hatf_{M_0}\|^2
+ \beta \|f - \hatf_{M_0}\|^2
$$
Combining the last equality with \eqref{eq:hatf_te} obtain
$$  
\|f - \hatf_{\hatte}\|^2 - \|f - \hatf_{\te}\|^2 =
\beta  \lkv \|f - \hat{f}_{\hatte}\|^2 + \|\hat{f}_{\hatte} - \hatf_{M_0}\|^2 %
- \|f - \hatf_{M_0}\|^2\rkv - \beta^2  \|\hat{f}_{\hatte} - \hatf_{M_0}\|^2.
$$
Plugging the last equality into \eqref{eq:S_rel_main} derive
\be  \label{eq:S_difference}
\frac{1}{\beta}\, \lkv S(\hatte) - S(\te) \rkv =
\frac{1}{2}\,\|f - \hat{f}_{\hatte}\|^2 - \frac{1}{2}\,\|f - \hatf_{M_0}\|^2  +
\frac{1-\beta}{2}\, \|\hat{f}_{\hatte} - \hatf_{M_0}\|^2 +  \frac{1}{2}\, \Delta_1
\ee
where, due to  \eqref{eq:hatf_te},
\bes 
\Delta_1 = \frac{1}{\beta} \, \sum_{M \in \cM} (\hatte_M - \te_M) \|f - \hatf_M\|^2
= \sum_{M \in \cM} \hatte_M \|f - \hatf_M\|^2 - \|f -\hatf_{M_0}\|^2.
\ees
Now rewrite  $\Delta_1$ as
$$
\Delta_1 = \sum_{M \in \cM} \hatte_M  \|\hatf_{\hatte} - \hatf_M\|^2.
$$
using  identity \eqref{eq:lem3} of Lemma~\ref{lem:lemma3} with $\tilde{f} = f$.
Combining the last   formulae with \eqref{eq:S_difference}, we arrive at
\be \label{eq:for19} 
\frac{1}{\beta}\, \lkv S(\hatte) - S(\te) \rkv =
 \|f - \hat{f}_{\hatte}\|^2 - \|f - \hat{f}_{\hatte}\|^2 +
\frac{1}{2} \lfi \sum_{M \in \cM} \hatte_M \|\hf_{\hte} - \hf_M\|^2 + (1-\beta)  \|\hat{f}_{\hatte} - \hatf_{M_0}\|^2  \rfi.
\ee
Now, using $\hatf_{\te}$ given by \eqref{eq:hatf_te}, derive
\be \label{eq:for20} 
2 \sigma\, \langle \xi, \hat{f}_{\hatte}- \hat{f}_{\te} \rangle = 2 \beta \sigma\, \langle \xi, \hat{f}_{\hatte}- \hat{f}_{M_0} \rangle
\ee
Similarly,
\be \label{eq:for21} 
\sum_{M \in \cM} (\hatte_M - \te_M) U_M = \beta \lkv \sum_{M \in \cM}  \hatte_M U_M - U_{M_0} \rkv.
\ee
Plugging  \eqref{eq:for19}--\eqref{eq:for21} into \eqref{eq:for10} and setting $\beta \to 0$, arrive at
\be \label{eq:for22} 
\|f - \hat{f}_{\hatte}\|^2  \leq \|f - \hf_{M_0}\|^2 + U_{M_0} - \sum_{M \in \cM}  \hatte_M U_M +
2 \sigma \, \langle \xi, \hat{f}_{\hatte}- \hat{f}_{M_0} \rangle - \frac{1}{2}\, \Delta_2
\ee
where, applying Lemma~\ref{lem:lemma3} with $\tilde{f} = \hf_{M_0}$, we obtain
\bes
\Delta_2 =  \sum_{M \in \cM}  \hatte_M \|\hatf_{\hte} - \hatf_M\|^2 + \|\hatf_{\hte} - \hatf_{M_0}\|^2
=  \sum_{M \in \cM}  \hatte_M \|\hatf_{M} - \hatf_{M_0}\|^2
\ees
Recall  that
\bes
\|\hf_{M_0} - f\|^2 = \sigma^2 \xi^T H_{M_0} \xi + \|f_{M_0} - f\|^2 \leq
 \sigma^2\, \nu_r^{-2}\, \| \Psi_{M_0}^T\, \eps\|^2 + \|f_{M_0} - f\|^2
\ees
and re-write \eqref{eq:for22} as
\be \label{eq:for24} 
\|f - \hat{f}_{\hatte}\|^2 \leq \|f - f_{M_0}\|^2 + \frac{5}{2} U_{M_0} + \Delta_0
\ee
where $\Delta_0$ is defined by formula \eqref{eq:Delta0} of Lemma~\ref{lem:lemma4}.
Now, in order to complete the proof apply Lemma~\ref{lem:lemma4} and note that the set $M_0 \in \cM$ is arbitrary.
\newline
$\Box$

\end{document}